\documentclass[journal]{IEEEtran}
\usepackage{graphicx}
\usepackage[cmex10]{amsmath}
\usepackage{cases}
\usepackage{amsthm}
\usepackage{cite}
\usepackage{amssymb}
\usepackage{algorithm}
\usepackage{algorithmic}
\usepackage{multirow}
\usepackage{amsmath}
\usepackage{xcolor}
\usepackage{subeqnarray}
\usepackage{cases}
\usepackage{enumerate}
\usepackage{cuted}
\usepackage{booktabs}
\usepackage{mathrsfs}
\usepackage{bbm}
\usepackage{graphicx}
\usepackage{array}
\setlength{\marginparwidth}{2cm} %
\newcolumntype{M}[1]{>{\centering\arraybackslash}m{#1}}
\usepackage[colorlinks]{hyperref}

\newtheorem{theorem}{\hskip\parindent \it Theorem}
\newtheorem{remark}{\hskip\parindent\bf Remark}

\ifCLASSINFOpdf
\else
\fi

\hyphenation{op-tical net-works semi-conduc-tor}
\begin{document}
\title{Multi-Subarray FD-RIS Enhanced Multi-user Wireless Networks: With Joint Distance-Angle Beamforming}
%\title{Frequency Diverse RIS (FD-RIS) Enhanced Far-Field Wireless Networks via Distance-Angle Beamforming}
\author{Han Xiao,~\IEEEmembership{Student Member,~IEEE,} Xiaoyan Hu,~\IEEEmembership{Member,~IEEE,} Wenjie Wang,~\IEEEmembership{Senior Member,~IEEE,} \\
%    Ang Li,~\IEEEmembership{Senior Member,~IEEE,} %Zhou Su,~\IEEEmembership{Senior Member,~IEEE,}
Kai-Kit~Wong,~\IEEEmembership{Fellow,~IEEE}, Kun~Yang,~\IEEEmembership{Fellow,~IEEE}, Shi Jin,~\IEEEmembership{Fellow,~IEEE}%,  Tiejun Cui~\IEEEmembership{Fellow,~IEEE}, %Shi Jin,~\IEEEmembership{Fellow,~IEEE},
	
%\thanks{Manuscript received June 12, 2023; revised August 27, 2023; accepted October 29, 2023.
%This work is supported in part by the National Natural Science Foundation of China (NSFC) under Grant 62201449 and Grant 62071370, and in part by the Key R$\&$D Projects of Shaanixi Province under Grant 2023-YBGY-040, and in part by the Qin Chuang Yuan High-Level Innovation and Entrepreneurship Talent Program under Grant QCYRCXM-2022-231, and in part by the ``Si Yuan Scholar" Foundation.
%The associate editor coordinating the review of this paper and approving it for publication was Prof. M. Elkashlan.
%\emph{(Corresponding author: Xiaoyan Hu.)}}	
%\thanks{Copyright (c) 2015 IEEE. Personal use of this material is permitted. However, permission to use this material for any other purposes must be obtained from the IEEE by sending a request to pubs-permissions@ieee.org.}
\thanks{H. Xiao, X. Hu, and W. Wang are with the School of Information and Communications Engineering, Xi'an Jiaotong University, Xi'an 710049, China. (email: hanxiaonuli@stu.xjtu.edu.cn, xiaoyanhu@xjtu.edu.cn, wjwang@mail.xjtu.edu.cn). \emph{(Corresponding author: Xiaoyan Hu.)}} %
\thanks{K.-K. Wong is with the Department of Electronic and Electrical Engineering, University College London, London WC1E 7JE, U.K. (email: kai-kit.wong@ucl.ac.uk).}
\thanks{K. Yang is with the School of Computer Science and Electronic Engineering, University of Essex, Colchester CO4 3SQ, U.K. (e-mail: kunyang@ essex.ac.uk).}
\thanks{S. Jin is with the National Mobile Communications Research Laboratory, Southeast University, Nanjing 210096, China. (e-mail: jinshi@seu.edu.cn,).}
%\thanks{T. Cui is with the State Key Laboratory of Millimeter Waves, Southeast University, Nanjing 210096, China (e-mail: tjcui@seu.edu.cn).}
}

\maketitle
\begin{abstract}
	The concept of the frequency diverse reconfigurable intelligent surface (FD-RIS) technology has been introduced, which can enable simultaneous implementation of distance-angle beamforming in far-field communication scenarios. In order to improve the managing ability on undesired harmonic signals and the diversity of frequency offsets, this paper presents a novel multi-subarray FD-RIS framework. In this framework, the RIS is evenly divided into multiple subarrays, each employing a distinct time-modulation frequency to enable the diversity of frequency offsets. Additionally, to suppress the undesired harmonic signals, a new time-modulation technique is employed to periodically adjust the phase-shift of each element. In this case, the signal processing model is first analytically derived. Then, we integrate it into a multi-user communication scenario and formulate an optimization problem that aims to maximize the weighted sum rate of all users. This is achieved by jointly optimizing the active beamforming, time delays, and modulation frequencies. Subsequently, a novel iterative algorithm is proposed to effectively solve this problem with low computing complexity. \textcolor{blue}{To evaluate the beamforming capability of the proposed multi-subarray FD-RIS, three communication scenarios with varying	spatial correlations among users are considered. Simulation results demonstrate that the proposed multi-subarray FD-RIS can significantly enhance and sustain the performance of communication networks by leveraging unique distance-angle beamforming, even when users share the same angular position where traditional RIS experiences severe degradation.} The proposed algorithm also demonstrates a notably superiority in performance and computational complexity compared  with the baseline algorithms such as semi-definite relaxation (SDR) and zero-forcing (ZF).
\end{abstract}
\begin{IEEEkeywords}
Multi-subarray frequency diverse RIS (FD-RIS), time modulation, distance-angle beamforming, multi-user, optimization algorithm
\end{IEEEkeywords}
\IEEEpeerreviewmaketitle

%\vspace{-2mm}
\section{Introduction}\label{sec:S1}
The international telecommunication union (ITU) has comprehensively defined the application scenarios and performance metrics for future 6G networks in IMT-2030 \cite{recommendation2023framework}. Among these, a significant emphasis has been placed on the immense demand for spectral efficiency and system capacity in communication systems. However, the spectral efficiency and system capacity of existing communication systems are gradually approaching their theoretical maximums. As a result, a significant challenge for upcoming 6G communication networks is to find new ways to improve these metrics further. \textcolor{blue}{As a key technology for 5G and future 6G networks, massive multiple-input multiple-output (MIMO) can significantly enhance the spectral efficiency of communication systems \cite{andrews2014will} by deploying a large number of antennas at the base station (BS). Leveraging these antennas, massive MIMO achieves strong beamforming capability, which in turn greatly improves energy efficiency of systems and interference management ability among users.}
 However, implementing massive MIMO technology in real-world applications faces several challenges, including high costs, high hardware complexity and significant energy consumption \cite{zhang2016fundamental}. These challenges stem primarily from the need for expensive radio frequency (RF) components and advanced signal processing systems to achieve optimal performance, which poses a substantial barrier to the practical deployment of the massive MIMO in future 6G networks.

\subsection{Related Work}
Reconfigurable intelligent surface (RIS) has emerged as a pivotal technology for future 6G networks, providing a cost-effective, hardware-efficient, and energy-efficient solution to mitigate the challenges faced by massive MIMO technology. By utilizing low-cost passive elements and simple control circuitry, RIS can effectively modify the electromagnetic properties of incoming signals, thereby enabling powerful beamforming capability \cite{wu2019intelligent, huang2019reconfigurable,nemati2020ris, zhou2022channel}. Specifically, \cite{wu2019intelligent} and \cite{huang2019reconfigurable} provide the systematic exploration for the RIS's potentials in enhancing energy efficiency of multi-user communication systems, conducting the research from different optimization objectives. The focus of \cite{wu2019intelligent} is on minimizing  total energy consumption of the system while meeting all users' quality of service (QoS) constraints, whereas \cite{huang2019reconfigurable} aims to optimize energy efficiency within a given power budget. Simulation results demonstrate that compared to traditional active relay systems, RIS significantly improves the energy efficiency of communication systems
and shows superior balancing capabilities in terms of users' QoS. Furthermore, the simulation results in \cite{wu2019intelligent} indicate that by achieving similar performance gains as massive MIMO systems, integrating RIS can significantly reduce the required number of antennas at BSs, thus lowering the hardware complexity and costs of the system. Subsequently, the authors in \cite{nemati2020ris} explore the feasibility of the RIS to improve the coverage of mmWave networks. In particular, they derive the closed-form expression for the peak reflection power of the RIS and the downlink signal-to-interference ratio (SIR) coverage. The results demonstrate the significant advantages of RIS in enhancing coverage for mmWave communication networks. Based on the aforementioned advantages, RIS has been incorporated into various communication networks to enhance performance  such as secure communications \cite{Cui2019Secure, Xiao2024simultaneously, xiao2024STAR-RIS}, mobile edge computing (MEC) \cite{hu2021reconfigurable, he2024Joint, xiao2025STAR-RIS_UAV, xiao2025Energy-Efficient}, and integrated sensing and communications (ISAC) \cite{zhu2023joint, zhong2023joint}.

\textcolor{blue}{Actually, the strong signal control capability of RIS is significantly constrained in certain communication scenarios \footnote{Unless stated otherwise, the system studied in this paper operates under the far-field communication model.}, particularly when some users exhibit high spatial correlation or even overlap in spatial directions. This limitation stems from the fact that conventional RIS can perform beamforming for incident signals only in the angular domain due to the plane-wave characteristics of electromagnetic propagation, which poses significant challenges in enabling independent beamforming or differentiated signal control for such users. Therefore, the single-dimensional control of conventional RIS lacks the adaptability and robustness for multi-user scenarios. To address this issue, introducing new beamforming degrees of freedom (DoFs) represents a promising research direction.}
%	even with the assistance of RIS technology, meeting the capacity demands of future 6G far-field communications still faces significant challenges, primarily due to the following reasons: (\romannumeral 1) Far-field communication systems must contend with substantial path loss, which is particularly pronounced at high-frequency carriers such as mmWave or terahertz bands. This severe attenuation significantly limits both communication range and system capacity. (\romannumeral 2) In far-field communication scenarios, the traditional RIS can only perform beamforming for incoming signals in the angular dimension due to the plane wave properties of electromagnetic waves. This single-dimensional control significantly limits the theoretical upper bound of signal enhancement, making it difficult to further improve system energy efficiency. Addressing the challenge of path loss is inherently difficult, as it is a fundamental characteristic of signal propagation and cannot be entirely avoided. Therefore, enhancing the energy efficiency of the system becomes the key to increase the capacity of far-field communication systems.
 Note that \cite{shen2023multi} highlights that the RIS is able to flexibly manipulate and concentrate signal energy from both distance and angle dimensions in \textcolor{blue}{near-field} communications due to the unique spherical wave properties of electromagnetic waves. This dual-dimensional control significantly improves spatial resolution and energy efficiency. Motivated by the beamforming capabilities, integrating distance-based beamforming into RIS holds significant potential for overcoming the aforementioned challenges. However, realizing distance-based beamforming inherently difficult due to the planar propagation characteristics of electromagnetic waves. Therefore, a critical question arises: ``\textbf{How can distance-based beamforming be effectively incorporated into RIS's beamforming?}'' It is worth noting that addressing this challenge is essential for unlocking new possibilities in RIS-supported communication systems.

Fortunately, the frequency diverse array (FDA) antenna serves as a prime example of achieving effective distance-angle beamforming in antenna systems \cite{antonik2006frequency}. Specifically, before transmitting the signal, a slight frequency offset is introduced to the carrier of each antenna, through which the distance-dimension beamforming can be effectively incorporated into the signal's transmitting. Actually, FDA antennas were originally applied in radar systems to enhance their detection capabilities \cite{khan2014frequency, xiong2016frequency}. Recently, some researchers begin to introduce FDA technology into communication systems \cite{wang2018retrodirective, liu2023rate, nusenu2022power} to explore its potentials in enhancing communication performance. \cite{wang2018retrodirective} initially integrates the FDA technology into simultaneous wireless information and power transfer systems, and extensive numerical results demonstrate that FDA can significantly outperform phased arrays in enhancing users' achievable rates and energy harvesting capabilities. Inspired by the intrinsic mechanism of the FDA antennas, if RIS can integrate different frequency offsets into the incident signals, it has the potential to achieve joint distance-angle beamforming for the incoming signals. However, unlike FDA antennas, RIS, as a passive device, lacks the ability to attain signal frequency diversity through RF components.

To address this issue, the frequency diverse RIS (FD-RIS) proposed in \cite{xiaoFD-RIS2024} provides a feasible solution. Its core idea can be summarized as follows: when time modulation techniques are used to periodically alter the phase shifts of  RIS,  the incident signal will be scattered into a series of harmonic signals according to Fourier series expansion theory. Each harmonic frequency has a frequency offset relative to the centre carrier frequency, and the size of this offset is determined by the harmonic order and the time modulation period. Based on this principle, the authors of \cite{xiaoFD-RIS2024} have verified the joint distance-angle beamforming capability of the FD-RIS through both theoretical analysis and numerical simulations. 
%Moreover, the comprehensive simulations conducted in \cite{xiaoFD-RIS2024}  demonstrate that the FD-RIS can concentrate over threefold signal energy towards users in contrast to conventional RIS, thereby significantly improve the energy efficiency.

\subsection{Motivation and Contributions}
Although the effectiveness of the FD-RIS in combining distance and angle beamforming for signal transmission has been verified, the investigation of the application potentials for FD-RIS-assisted communication systems is still in its infancy stage. In fact, research on FD-RIS in \cite{xiaoFD-RIS2024} has made some progress, yet there are still certain challenges that need to be addressed, primarily in the following aspects: (\romannumeral 1) The potential of the FD-RIS in single-user scenarios has been examined, however, it is crucial to consider the more prevalent multi-user scenarios in practical applications. Multi-user communication systems face even more complex challenges, especially in environment where the spectrum sharing and interference management issues significantly impact the system performance. Therefore, investigating the capability of the FD-RIS in enhancing the spectrum efficiency of multi-user communication systems holds significant theoretical importance. (\romannumeral 2) In the exiting FD-RIS system, all elements modulate signals using identical period, enabling the FD-RIS only provide linear frequency shifts for incoming signals. Clearly, the lack of diversity in frequency shifts may constrain the freedom of the FD-RIS in distance and angle control, thereby impacting its potential applications in complex scenarios.
(\romannumeral 3) In \cite{xiaoFD-RIS2024}, the proposed FD-RIS system utilizes only a limited number of low-order harmonics for signal transmission, while neglecting the regulation of other harmonics, which may result in the unintended propagation of other harmonic signals and cause disruption to other systems.

Hence, the primary motivations of this paper are to tackle the aforementioned challenges and enhance the development of the FD-RIS. In response to these above shortcomings, a multi-subarray FD-RIS framework is proposed and utilized to enhance the performance of multi-user wireless networks. The contributions of this paper are summarized as below:
\begin{itemize}
	\item \textbf{\textit{Multi-subarray FD-RIS Framework Considering the Diversity of Frequency Offsets and Harmonic Suppression:}} A novel multi-subarray FD-RIS framework is proposed in this paper where the RIS is evenly segmented into multiple subarrays, each subarray employing a distinct time-modulation frequency to enable the diversity of frequency offsets. In order to control the undesired harmonic signals, a novel time-modulation method is leveraged to periodically adjust the phase-shifts of FD-RIS. By doing so, the undesired harmonic signals can be effectively suppressed.  However, this modulation method severely restricts the spatial control ability of the FD-RIS for the desired harmonic signals. To overcome this obstacle, different time delays are incorporated into the time-modulation procedure of each component in order to amplify the spatial degrees of freedom (DoFs) of the FD-RIS for the intended harmonic signals.
     \item\textbf{\textit{Signal Processing Model of Multi-subarray FD-RIS and Problem Formulation:}} We establish the signal processing model of the proposed multi-subarray FD-RIS and demonstrate its distance-angle beamforming capability from a theoretical perspective. To validate the effectiveness of the proposed multi-subarray FD-RIS, we incorporate it into the multi-user wireless communication scenario and formulate an optimization problem with the aim of maximizing all users' weighted sum rate through jointly designing the active beamforming of the BS, the modulation frequencies and time delays of the FD-RIS.
	 \item \textbf{\textit{Effective Alternating Algorithm with Guaranteed Convergence and Low Computational Complexity:}} It is  worthy note that the formulated optimization problem is non-convex due to the non-convex objective function and the strong couplings among variables. In order to address this problem, we first utilize the minimum mean-square error (MMSE) technique to transform the original problem, then the converted problem are divided into active beamforming subproblem, time-delay subproblem and modulation frequency subproblem. The active beamforming subproblem is tackled using a combination of the Lagrange multiplier method and bisection search method to achieve semi-closed form solutions. The time-delay subproblem is resolved through utilizing the Riemannian conjugate gradient algorithm. As for the modulation frequency subproblem, the globally convergent method of moving asymptotes (GCMMA) algorithm is employed to effectively address it with closed-form solutions. In addition, the assured convergence of the proposed iterative algorithm can be confirmed by examining the convergence curves outlined in Section \ref{sec:S5}.
	 \item \textbf{\textit{Substantial Performance Improvement:}} \textcolor{blue}{To quantify the beamforming capability of the proposed multi-subarray FD-RIS, three communication scenarios characterized by different spatial correlation levels among users are considered. Extensive simulations are conducted and the obtained simulation results indicate that compared to traditional RIS, the proposed multi-subarray FD-RIS can substantially enhance and effectively sustain system performance through its unique distance–angle beamforming capability, particularly when users share the same spatial angle, underscoring its substantial potential for future wireless networks.} The comparison with semi-definite relaxation (SDR) and zero-forcing (ZF) algorithms validate  the superiority of the proposed algorithm in terms of computational complexity and performance enhancement.
\end{itemize}

\textcolor{blue}{ The remainder of this paper is organized as follows. Section II introduces the signal processing model of the multi-subarray FD-RIS and demonstrates its distance–angle beamforming capability. Section III presents the communication system model and formulates the optimization problem. In Section IV, an iterative algorithm is developed to jointly optimize the active beamforming, time delays, and time-modulation frequencies. Section V provides simulation results to verify both the potential of the proposed FD-RIS and the effectiveness of the developed algorithm. Finally, Section VI concludes the paper and offers some application prospects of FD-RIS.
}

\textit{Notation:} $\lceil \cdot \rceil$ denotes the ceiling operator, which rounds a number up to the nearest integer. Operator $\circ$ signifies the Hadamard product. The operations $(\cdot)^T$, $(\cdot)^*$  and $(\cdot)^H$ denote the transpose, conjugate and conjugate transpose, correspondingly.  $\operatorname{Diag}(\mathbf{a})$ denotes a diagonal matrix whose diagonal elements are composed of the vector $\mathbf{a}$ .
$\operatorname{diag}(\mathbf{A})$ refers to a vector whose components consist of the diagonal elements of matrix $\mathbf{A}$. Additionally, the symbols $|\cdot|$  and $\|\cdot\|$ are indicative of the complex modulus and complex vector modulus, respectively. $\mathcal{I}_i(\mathbf{a})$ represents the $i$-th entry of the vector $\mathbf{a}$. $\mathbf{A}^{\dagger}$ represents the pseudo-inverse of the matrix $\mathbf{A}$. $x\sim \mathcal{CN}(c, d)$ represents a circularly symmetric complex Gaussian random variable denoted by $x$, characterized by a mean of $c$ and a variance of $d$.  $(x)^{+}$ and $(x)^{-}$ denote the  operations of $\max(0, x)$ and $\max(0, -x)$, respectively.

\section{Signal Processing Model}\label{sec:S2}
In this section, we will establish the signal processing model of \textcolor{blue}{the multi-subarray FD-RIS with the uniform planar array (UPA) structure.} Specifically, it is assumed that the multi-array FD-RIS is constituted by
$L=R\times S$ subarrays, where each subarray possesses $M\times N$ elements and a time modulation period $T_l$, for $l\in\mathcal{L}\triangleq\{1,\cdots, L\}$. Hence, the total number of elements equipped at the FD-RIS is $I=I_z\times I_y$, where $I_z=R\times M$ and $I_y=S\times N$. \textcolor{blue}{When the BS with $N_\mathrm{t}$ antennas transmits the signal $
		\mathbf{x}_\mathrm{BS}(t) = [w_1 e^{j2\pi f_\mathrm{c}t}s, \cdots, w_{n_\mathrm{t}} e^{j2\pi f_\mathrm{c}t}s, \cdots, w_{N_\mathrm{t}} e^{j2\pi f_\mathrm{c}t}s]^T
$ to the FD-RIS, with $w_{\mathrm{n_\mathrm {t}}}$ representing the beamforming coefficient of the $n_\mathrm{t}$-th antenna, $f_\mathrm{c}$ denoting the centre carrier frequency, and $s$ being the narrow-band complex envelope,
 the signals transmitted by different antennas are linearly superimposed in space due to the linearity of the wireless channel. Hence, for each FD-RIS element, the received signal can be viewed as  
$
	\sum_{n_\mathrm{t}=1}^{N_\mathrm{t}} w_{n_\mathrm{t}} e^{j2\pi f_\mathrm{c}t}s.
$
Therefore, the signals reflected by the FD-RIS with multiple subarrays will be superimposed as}
\begin{align}
	x(t)=&\zeta(d_\mathrm{br}) s\sum_{n_\mathrm{t}=1}^{N_\mathrm{t}}w_{\mathrm{n_\mathrm
	t}}\sum_{i_z=1}^{I_z}\sum_{i_y=1}^{I_y}\notag\\&
e^{j2\pi f_\mathrm{c}\left(t-\frac{d^\mathrm{br}_{n_\mathrm{t}, i_z , i_y}}{c}\right)}\Upsilon_{i_z, i_y}(t),
\end{align}
where %$f_\mathrm{c}$ denotes the centre carrier frequency, $s$ is the narrow-band complex envelope,
$\zeta(d_\mathrm{br})$ represents the large-scale path loss of the transmitted signal from the BS to the FD-RIS with $d_\mathrm{br}$ being the spatial space between the reference antenna at the BS and the reference element at the FD-RIS. %$w_{\mathrm{n_\mathrm {t}}}$ is the beamforming coefficient of the $n_\mathrm{t}$-th antenna for the transmitting signal.
$\Upsilon_{i_z, i_y}(t)$ indicates the time-modulation reflected coefficient for the $(i_z, i_y)$-th element of the FD-RIS. $d^\mathrm{br}_{n_\mathrm{t}, i_z , i_y}$ is the distance between the $n_\mathrm{t}$-th antenna and the $(i_z, i_y)$-th element, which is given
as $d^\mathrm{br}_{n_\mathrm{t}, i_z , i_y}=d_\mathrm{br}+\Gamma_{n_\mathrm{t},i_z, i_y}^\mathrm{br}$ \cite{lu2021aerial}, where
$\Gamma^\mathrm{br}_{n_\mathrm{t}, i_z,i_y}=\Gamma^\mathrm{br}_{i_z,i_y}+(n_\mathrm{t}-1)d\sin(\theta_\mathrm{br})\sin(\varphi_{\mathrm{br}})$, $
\Gamma^\mathrm{br}_{i_z,i_y}= (i_z-1)d\cos(\theta_\mathrm{br})+(i_y-1)d\sin(\theta_\mathrm{br})\cos(\varphi_{\mathrm{br}})$,  with $\theta_\mathrm{br}$ and $\varphi_{\mathrm{br}}$ respectively being the azimuth and elevation angles of arrival (AoA) for signals from BS to FD-RIS, and $d$ is the space between elements of FD-RIS.
\begin{remark}
\normalfont{When a signal is incident on a RIS with periodic time modulation, the incident signal will be reflected into a series of harmonic signals \cite{zhang2018space, xiaoFD-RIS2024}. The frequency offsets between these harmonic signals and the central frequency are determined by the harmonic order and the modulation frequency. In practice, due to limited control abilities, it is challenging to precisely manipulate all harmonic signals, which may result in uncontrolled harmonic signals propagating in undesired directions. This can adversely affect the transmission capabilities and efficiency of other communication systems. Furthermore, high-order harmonic signals can spread into unnecessary frequency bands, occupying additional bandwidth resources and causing spectral pollution. To address these challenges, it is essential to identify a proficient modulation technique to transform the incident single into a single low-order harmonic signal.}
\end{remark}

To ensure that each subarray reflect the pure low-order harmonic signal, the periodic time modulation method presented in \cite{dai2020high} is leveraged. In this case, the $(i_z, i_y)$-th element's reflected coefficient of the FD-RIS can be expressed as
\begin{align}
&	\Upsilon_{i_z, i_y}(t)=A_0e^{j\phi_{i_z, i_y}(t)}, \label{eq_reflection_modulation}\\
& \phi_{i_z, i_y}(t)=\phi_0+ P\operatorname{mod}(t, T_{l}),
\end{align}
where index $l=	(\lceil i_z/M\rceil-1)S+\lceil i_y/N\rceil$, $P$ is the phase slope, $\operatorname{mod}(t, T_{l})$ represents the operation of taking the remainder with $t$ and $T_{l}$ representing the dividend and divisor, respectively. $A_0$ and $\phi_0$ represent the initial amplitude and phase of each element, respectively. Based on the theory of Fourier series expansion for periodic functions, the expression $\Upsilon_{i_z, i_y}(t)$ can be reformulated as
\begin{align}\label{eq_R_F}
	&\Upsilon_{i_z, i_y}(t)=\sum_{z=-\infty}^{z=+\infty}a^{i_z, i_y}_ze^{j2\pi z {f}_{l}t},
\end{align}
where $z\in\mathbb{Z}$ denotes the harmonic order number, ${f}_{l}$ is the modulation frequency of the $l$-th subarray given as ${f}_{l}=\frac{1}{T_l}$, $a^{t_z, t_y}_z$ represents the Fourier series coefficient given as
\begin{align}
	a^{i_z, i_y}_z=&\frac{1}{T_{l}}\int_{0}^{T_{l}}A_0e^{j(\phi_0+Pt)}e^{-j2\pi z {f}_{l}t}dt\notag\\
	=&\frac{jA_0e^{j\phi_0}\left(1-e^{j(PT_{l}-2z\pi)}\right)}{PT_{l}-2z\pi}.
\end{align}
It is worth noting that when $PT_{l}=2g\pi, ~g\in\mathbb{Z}$, $a^{i_z, i_y}_z$ can be further derived as
\begin{align}\label{eq_F_coeffcient}
	a^{i_z, i_y}_z=\begin{cases}
		A_0e^{j\phi_0}, &z=g,\\
		0,& z\neq g,
	\end{cases}
\end{align}
which means that we can obtain any single harmonic signal by appropriately selecting $P$. In other words, if $PT_l=2g\pi, ~g\in\mathbb{Z}$, the signal reflected by the $l$-th subarray contain only the $g$-th order harmonic signal. Substituting \eqref{eq_F_coeffcient} into \eqref{eq_R_F}, we have
\begin{align}
	\Upsilon_{i_z, i_y}(t)=A_0e^{j(2\pi g {f}_{l}t+\phi_0)}.
\end{align}
Hence, $x(t)$ can be re-expressed as
\begin{align}
	x(t)=&A_0e^{j\phi_0}\zeta(d_\mathrm{br}) s\sum_{n_\mathrm{t}=1}^{N_\mathrm{t}}w_{\mathrm{n_\mathrm
				t}}\sum_{i_z=1}^{I_z}\notag\\&\sum_{i_y=1}^{I_y}e^{j2\pi(f_\mathrm{c}+g{f}_{l})t}
		 e^{-\frac{j2\pi d^\mathrm{br}_{n_\mathrm{t}, i_z , i_y}}{\lambda}},
\end{align}
where $\lambda$ is the wave length of the carrier frequency.
\begin{remark}
\normalfont{	It is noteworthy that when employing the time modulation scheme as depicted in \eqref{eq_reflection_modulation}, the reflectivity coefficients assigned to the $g$-th order harmonic signal by all elements within each subarray become identical, i.e., $A_0e^{j(2\pi g {f}_{l}t+\phi_0)}$, which lacks of phase gradients in the reflectivity coefficients. This makes it challenging to spatially manipulate the $g$-th order harmonic signal towards specific directions according to the generalized Snell's law \cite{cui2014coding}.}
\end{remark}
To overcome this challenge, we introduce a time delay \textcolor{blue}{\footnote{\textcolor{blue}{The time delay represent a controllable temporal shift applied to the phase-control sequence of each RIS element. Such a time-shifted control process can be readily realized by adjusting the timing of the digital control signals generated by the FPGA.}}}, $\tau_i, i\in\mathscr{I}\triangleq\{1, \cdots, I\}$, into the periodical reflection coefficient of the $i$-th element in the FD-RIS. Specifically, according to the time-delay feature of the Fourier series, the reflection coefficient of the $i$-th element with a time delay $\tau_i$ can be expressed as
\begin{align}
	\Upsilon_{i_z, i_y}(t-\tau_i)=A_0e^{j\left(2\pi g {f}_{l}t+\phi_0-2\pi g {f}_{l}\tau_i\right)},
\end{align}
where $i_z=\lceil \frac{i}{I_y}\rceil$, $i_y=\operatorname{mod}(i,I_y)$, $i\in\mathscr{I}$.
The above conclusion indicates that the $i$-th element can offer an arbitrary phase-shift for the reflected harmonic signal by properly and carefully introducing a time delay. Consequently, implementing time delays on each element of the FD-RIS will significantly enhance the spatial modulation capabilities of the FD-RIS for the incident signals. In this case, the superimposition form of the reflected signals of the FD-RIS can be further expressed as
\begin{align}
	x(t)=&A_0e^{j\phi_0}\zeta(d_\mathrm{br}) s\sum_{n_\mathrm{t}=1}^{N_\mathrm{t}}w_{\mathrm{n_\mathrm
			t}}\sum_{i_z=1}^{I_z}\notag\\&\sum_{i_y=1}^{I_y}e^{j2\pi(f_\mathrm{c}+g{f}_{l})t}
e^{-\frac{j2\pi d^\mathrm{br}_{n_\mathrm{t}, i_z , i_y}}{\lambda}}e^{-j2\pi g{f}_l\tau_i}.
\end{align}
Therefore, the received signal at the position $(\hat{d}, \theta, \varphi)$ can be expressed as
\begin{align}\label{eq_received_signal}
	y(t)=\zeta(\hat{d})x\left(t-\frac{d_{i_z, i_y}}{c}\right)+n,
\end{align}
where $d_{i_z, i_y}=\hat{d}+\Gamma_{i_z, i_y}$ with $\Gamma_{i_z, i_y}=(i_z-1)d\cos(\theta)+(i_y-1)d\sin(\theta)\cos(\varphi)$, and $n\sim(0, \sigma^2)$ is the additive white Gaussian noise (AWGN) with $\sigma^2$ being the noise power. 
%The energy of the received signal
%can be calculated as
%\begin{align}\label{eq_}
%\left|y(t)\right|^2=&A_0^2\zeta^2(\hat{d})\zeta^2(d_\mathrm{br})\Bigg|\sum_{n_\mathrm{t}=1}^{N_\mathrm{t}}w_{\mathrm{n_\mathrm
%		t}}\sum_{i_z=1}^{I_z}\sum_{i_y=1}^{I_y}e^{j2\pi g{f}_{l}(t-\frac{d_{iz, iy}}{c})}\notag\\
%	&e^{-j2\pi g{f}_{l}\tau_i}e^{-\frac{j2\pi \left(\Gamma^\mathrm{br}_{n_\mathrm{t}, i_z , i_y}+\Gamma_{i_z, i_y}\right)}{\lambda}}\Bigg|^2+\sigma^2.
%\end{align}

\textcolor{blue}{Next, the distance-angle beamforming of the proposed multi-subarray FD-RIS is verified. Specifically, we consider the case with $N_\mathrm{t}=1$ and temporarily ignore the large-scale path loss. Based on the received signal expression in \eqref{eq_received_signal}, the normalized beamforming gain can be derived as
	\begin{align}
		BF=&\frac{1}{I^2}\Bigg|\sum_{i_z=1}^{I_z}\sum_{i_y=1}^{I_y}e^{j2\pi g{f}_{l}(t-\frac{d_{iz, iy}}{c})}\times\notag\\
			&e^{-j2\pi g{f}_{l}\tau_i}e^{-\frac{j2\pi \left(\Gamma^\mathrm{br}_{i_z , i_y}+\Gamma_{i_z, i_y}\right)}{\lambda}}\Bigg|^2.
	\end{align} 
Then, we aim at maximizing the normalized beamforming gain $BF$ by appropriately selecting time delays, $\tau_i, i\in\mathscr{I}$, and time-modulation frequencies $f_l, l\in\mathcal{L}$. Fig. \ref{fig:pattern_verification} shows the optimal normalized beampattern of FD-RIS and traditional RIS. It is observed that conventional RIS can manipulate signal beamforming only in the angular domain. However, this single-dimensional capability becomes severely limited when multiple users are located at the same angle, as the system cannot effectively distinguish between them, thereby degrading their QoS. In contrast, FD-RIS enables simultaneous control in both the distance and angular domains, offering new opportunities to address this limitation. By leveraging dual-domain control, even when users share the same angular position, the system can exploit differences in the distance domain to reshape signal propagation. With this flexible spatial manipulation, FD-RIS shows great potential to overcome the beamforming limitations of conventional RIS.
\begin{figure}[ht]
	\centering
	\includegraphics[scale=0.45]{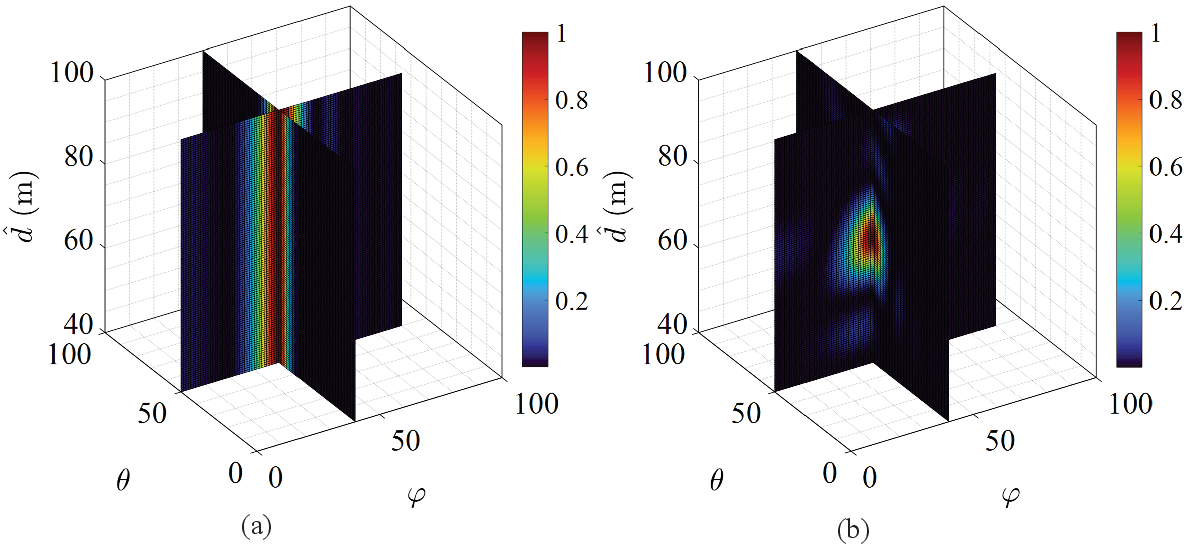}\\
	\caption{Optimal normalized beampattern considering $(\hat{d}, \theta, \varphi)=(70m, 50^\circ, 40^\circ)$, $f_l=0.2+20(l-1)/(L-1)$ MHz, $l\in\mathcal{L}$: (a) Conventional RIS; (b) FD-RIS.}\label{fig:pattern_verification}
\end{figure}
}
%According to the expression of the received signal energy, we can find that the presence of the frequency offsets causes the received signal energy to be influenced not only by the angles but also by the distances between the FD-RIS and the users. This means that we can  precisely manipulate the signal energy concentration towards the designated position in terms of angle and distance by meticulously designing the reflection coefficients of the FD-RIS, thereby augmenting the signal-to-interference-plus-noise ratio (SINR) at the specified location.
\begin{figure}[ht]
	\centering
	\includegraphics[scale=0.3]{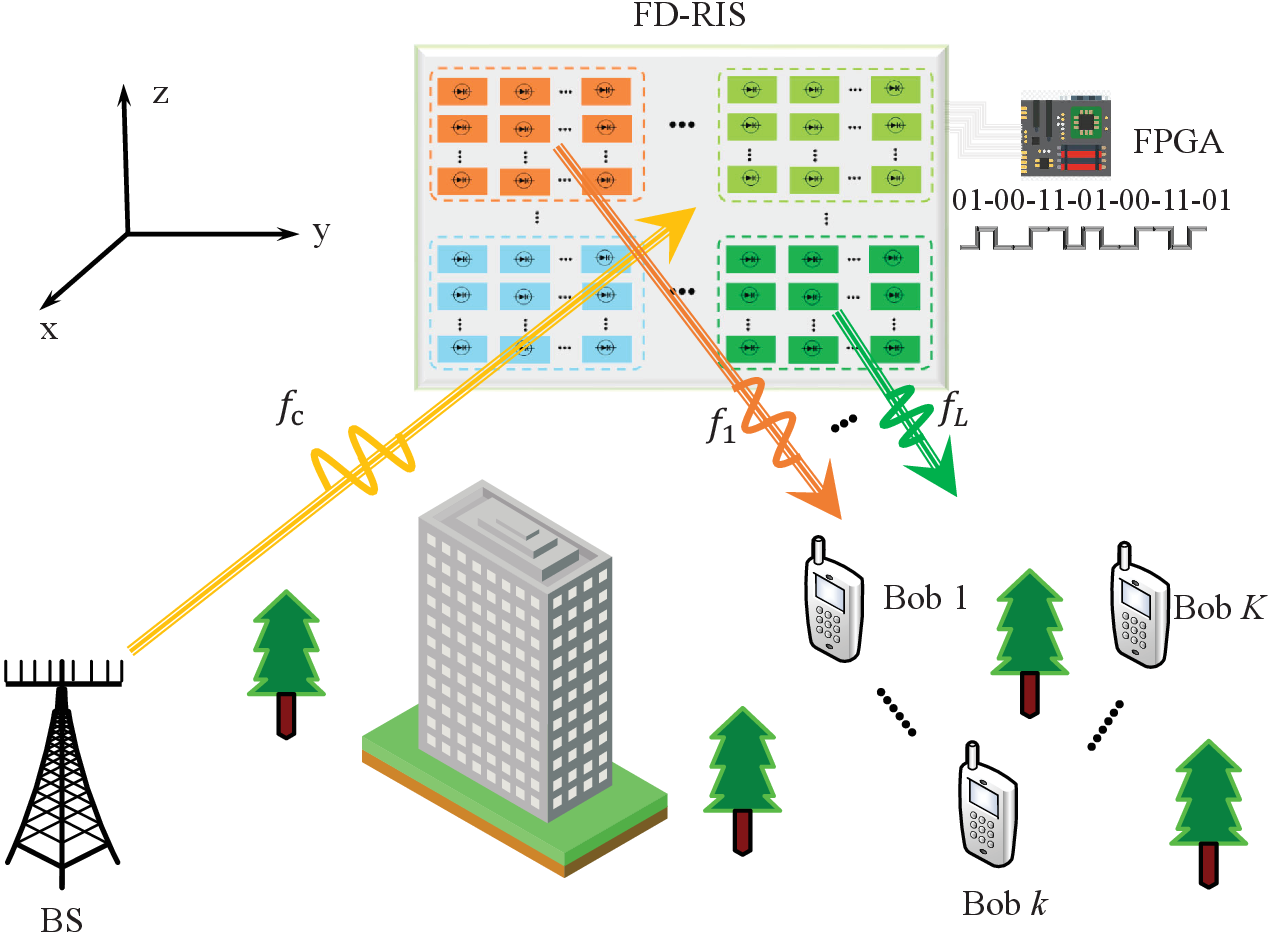}\\
	\caption{The multi-user wireless network supported by the multi-subarray FD-RIS.}\label{fig:scenario}
\end{figure}
\section{System Model and Problem Formulation}\label{sec:S3}
\subsection{System Model}
In this section, the efficacy of the multi-subarray FD-RIS is verified by considering the communication setup illustrated in Fig. \ref{fig:scenario}. This system comprises a BS equipped with \textcolor{blue}{a uniform linear array (ULA) of $N_\mathrm{t}$ antennas}, $K$ users each equipped with a single antenna, and a multi-subarray FD-RIS \textcolor{blue}{\footnote{\textcolor{blue}{The modulation frequencies applied to different subarrays do not alter the inherent response of the RIS. Because the RIS's responses for the incident signal depend solely on its carrier frequency. As long as the carrier frequency remains unchanged, the RIS responses under different states with respect to the incident signal, will not vary. It should be emphasized that the modulation frequencies in the RIS influence only the speed of phase-shift variation applied to the incident signal.}}} configured similarly to the FD-RIS described in section \ref{sec:S2}. It is assumed that the direct connection between the BS and users be blocked by buildings or trees, a situation frequently encountered in urban communication scenarios. Due to the existence of the LoS links between the BS and the FD-RIS, as well as between the FD-RIS and the users, the Rician channel model is utilized to represent the connections between the BS and the FD-RIS, i.e., $\mathbf{H}_\mathrm{BR}$, and also between the FD-RIS and the users, i.e.,  $\mathbf{h}_{\mathrm{r}k}$. In particular, channel $\mathbf{H}_\mathrm{BR}$ can be expressed as
\begin{align}
	\mathbf{H}_\mathrm{BR}=\zeta(d_\mathrm{br}) \left(\sqrt{\frac{\beta_1}{1+\beta}}\mathbf{H}_\mathrm{BR}^\mathrm{LoS}+\sqrt{\frac{1}{1+\beta_1}}\mathbf{H}_\mathrm{BR}^\mathrm{NLoS}\right),
\end{align}
where $\beta_1$ is the Rician factor. $\mathbf{H}_\mathrm{BR}^\mathrm{NLoS}$ denotes the non line-of-sight (NLoS) component, and each element of $\mathbf{H}_\mathrm{BR}^\mathrm{NLoS}$  is independent identically distributed (i.i.d.) and follows complex Gaussian distribution with zero mean and unit variance. According to the signal processing model of the FD-RIS in section \ref{sec:S2}, the LoS component, i.e., $\mathbf{H}_\mathrm{BR}^\mathrm{LoS}$, can be given as
\begin{align}
	\mathbf{H}_\mathrm{BR}^\mathrm{LoS}=\mathbf{a}_\mathrm{R}\mathbf{a}_\mathrm{B}^T,
\end{align}
where
\begin{itemize}
\item \textcolor{blue}{$\mathbf{a}_\mathrm{R}=\left[1, \cdots, e^{-j2\pi\frac{\Gamma^\mathrm{br}_{i_z, i_y}}{\lambda}}, \cdots, e^{-j2\pi\frac{\Gamma^\mathrm{br}_{I_z, I_y}}{\lambda}}\right]^T$ denotes the steering vector of FD-RIS, where $\Gamma^\mathrm{br}_{i_z,i_y}= (i_z-1)d\cos(\theta_\mathrm{br})+(i_y-1)d\sin(\theta_\mathrm{br})\cos(\varphi_{\mathrm{br}})$.}
	\vspace{2mm}
	\item \textcolor{blue}{$\mathbf{a}_\mathrm{B}=\Big[1,\cdots, e^{-j2\pi\frac{(n_\mathrm{t}-1)d\sin(\theta_\mathrm{br})\sin(\varphi_{\mathrm{br}})}{\lambda}}, \cdots,\\ e^{-j2\pi\frac{(N_\mathrm{t}-1)d\sin(\theta_\mathrm{br})\sin(\varphi_{\mathrm{br}})}{\lambda}}\Big]^T$ is the steering vector of the BS.}
\end{itemize}

Similarly, the link between the FD-RIS and the $k$-th user, i.e., $\mathbf{h}_{\mathrm{r}k}$, can be expressed as
\begin{align}
		\mathbf{h}_{\mathrm{r}k}=\zeta(d_{\mathrm{r}k}) \left(\sqrt{\frac{\beta_2}{1+\beta}}\mathbf{h}_{\mathrm{r}k}^\mathrm{LoS}+\sqrt{\frac{1}{1+\beta_2}}\mathbf{h}_{\mathrm{r}k}^\mathrm{NLoS}\right),
\end{align}
where $\beta_2$ is the Rician factor of FD-RIS-users link, $\zeta(d_{\mathrm{r}k})$ denotes the large-scale path loss with $d_{\mathrm{r}k}$ being the distance between the FD-RIS and the $k$-th user. $\mathbf{h}_{\mathrm{r}k}^\mathrm{NLoS}$ is the NLoS component, and each element of $\mathbf{h}_{\mathrm{r}k}^\mathrm{NLoS}$ is i.i.d. and follows complex Gaussian distribution with zero mean and unit variance. $\mathbf{h}_{\mathrm{r}k}^\mathrm{LoS}$ represents the LoS component, which is given by
\begin{align}
	\mathbf{h}_{\mathrm{r}k}^\mathrm{LoS}=\left[1, \cdots, e^{-j2\pi\frac{\Gamma^{\mathrm{r}k}_{i_z, i_y}}{\lambda}}, \cdots, e^{-j2\pi\frac{\Gamma^{\mathrm{r}k}_{I_z, I_y}}{\lambda}}\right]^T,
\end{align}
with $\Gamma^{\mathrm{r}k}_{i_z, i_y}=(i_z-1)d\cos(\theta_{\mathrm{r}k})+(i_y-1)d\sin(\theta_{\mathrm{r}k})\cos(\varphi_{\mathrm{r}k})$, where $\theta_{\mathrm{r}k}$ and $\varphi_{\mathrm{r}k}$ are the azimuth and elevation angles of departure (AoD) for signals from FD-RIS to  user $k$.

Based on the signal receiving model in section \ref{sec:S2}, the signal received by the $k$-th user can be expressed in a compact form as
\begin{align}	y_k=\mathbf{h}_{\mathrm{r}k}^T\tilde{\boldsymbol{\Theta}}\boldsymbol{\Theta}_k\mathbf{H}_{\mathrm{BR}}\mathbf{w}_ks_k+\sum_{j\neq k}^{K}\mathbf{h}_{\mathrm{r}k}^T\tilde{\boldsymbol{\Theta}}\boldsymbol{\Theta}_k\mathbf{H}_{\mathrm{BR}}\mathbf{w}_js_j+n_k,
\end{align}
where $n_k\sim\mathcal{CN}(0, \sigma_k^2)$ denotes the $k$-th user's AWGN with $\sigma_k^2$ being the noise power. In addition,
\begin{itemize}
	\item $\mathbf{w}_k=[w_k^1, \cdots, w_k^\mathrm{n_\mathrm{t}}, \cdots, w_k^\mathrm{N_\mathrm{t}}]^T,~ k\in\mathcal{K}\triangleq\{1, \cdots, K\},$ denotes the BS's active beamforming for the $k$-th user.
	\vspace{2mm}
	\item  $\boldsymbol{\Theta}_k=\operatorname{Diag}\{\theta^k_1, \cdots, \theta^k_i, \cdots, \theta^k_I\}, ~ k\in\mathcal{K}$  is the passive beamforming matrix associated with the frequency offsets and user $k$'s distance with  $\theta^k_i=\Upsilon_{i_z, i_y}(t)e^{-j2\pi g{f}_l\frac{d^{\mathrm{r}k}_{i_z, i_y}}{c}}$, which is unique to the FD-RIS.
	\vspace{2mm}
	\item $\tilde{\boldsymbol{\Theta}}=\operatorname{Diag}\{\tilde{\theta}_1, \cdots, \tilde{\theta}_i \cdots, \tilde{\theta}_I\}$ also represents the passive beamforming matrix with $\tilde{\theta}_i=e^{-j2\pi g{f}_l\tau_i}$, which is introduced by the time delays.
\end{itemize}

Thus, the achievable communication rate of the $k$-th user can be derived as
\begin{align}
	R_k=\log_2\Bigg(1+\frac{\left|\mathbf{h}_{\mathrm{r}k}^T\tilde{\boldsymbol{\Theta}}\boldsymbol{\Theta}_k\mathbf{H}_{\mathrm{BR}}\mathbf{w}_k\right|^2}{\sum_{j\neq k}^{K}\left|\mathbf{h}_{\mathrm{r}k}^T\tilde{\boldsymbol{\Theta}}\boldsymbol{\Theta}_k\mathbf{H}_{\mathrm{BR}}\mathbf{w}_j\right|^2+\sigma_k^2}\Bigg).	
\end{align}
\subsection{Problem Formulation}
To evaluate the effectiveness of the proposed multi-subarray FD-RIS, we formulate an optimization problem with the aim of maximizing $K$ users' weighted sum rate by jointly designing the active beamforming $\{\mathbf{w}_k\}_{k=1}^K$, the modulation frequencies $\mathbf{f}=[{f}_1, \cdots, {f}_L]^T$ and time delays $\boldsymbol{\tau}=[\tau_1, \cdots, \tau_I]^T$. The established optimization problem is given by
\begin{subequations}\label{eq_orig_opti}
	\begin{align}
		&\max _{\{\mathbf{w}_k\}_{k=1}^K, \mathbf{f}, \boldsymbol{\tau}}~ \sum_{k=1}^{K} \omega_k R_k,\notag \\
		&\qquad\text { s.t. }~f_{\min}\leq {f}_l\leq f_{\max},~\forall l\in\mathcal{L}, \label{eq_orig_opti_1}\\
		&\qquad\qquad~\sum_{k=1}^{K}\left\|\mathbf{w}_k\right\|^2\leq P_\mathrm{tmax}, \label{eq_orig_opti_2}\\
		&\qquad\qquad~\tau_i\geq 0,~\forall i\in\mathscr{I},\label{eq_orig_opti_3}
	\end{align}
\end{subequations}
where $\omega_k, ~k\in\mathcal{K},$ is the weighted factor of the $k$-th user, which can be utilized to control the priority of the $k$-th user and the fairness among the users. $f_{\min}$ and $f_{\max}$ denote the lower and upper bounds of the modulation frequency ${f}_l,$ for $l\in\mathcal{L}$. $P_\mathrm{tmax}$ is the maximum power budget of the BS. Actually, it is challenging to address this optimization problem, since the non-convexity of the objective function with respect to (w.r.t.) $\{\mathbf{w}_k\}_{k=1}^K$, $\mathbf{f}$ and $ \boldsymbol{\tau}$.
To handle this problem, the MMSE method \cite{shi2011iteratively} is first utilized to transform the objective function expressed in logarithmic form. Then,
the alternating strategy is leveraged to divide the transformed optimization problem into three subproblems, which are respectively for the design of active beamforming, time-delay variables, and modulation frequency variables. Specifically, leveraging the MMSE method and introducing auxiliary variables $W_k$ and $\mu_k$, the communication rate $R_k, ~k\in\mathcal{K},$ can be converted as
\begin{align}\label{eq_mmse_w}
	R_k=&\frac{1}{\ln2}\max_{W_k, u_k\geq 0}\ln (W_k)-\notag\\
	&W_kE_k(u_k, \{\mathbf{w}_k\}_{k=1}^K, \mathbf{f}, \boldsymbol{\tau})+1,
\end{align}
where
$E_k=u_k^H\Big(\sum_{j\neq k}^{K} \left|\mathbf{h}_{\mathrm{r}k}^T\tilde{\boldsymbol{\Theta}}\boldsymbol{\Theta}_k\mathbf{H}_{\mathrm{BR}}\mathbf{w}_j\right|^2+\sigma_k^2\Big)u_k+\left(u_k^H\mathbf{h}_{\mathrm{r}k}^T\tilde{\boldsymbol{\Theta}}\boldsymbol{\Theta}_k\mathbf{H}_{\mathrm{BR}}\mathbf{w}_k-1\right)\left(u_k^H\mathbf{h}_{\mathrm{r}k}^T\tilde{\boldsymbol{\Theta}}\boldsymbol{\Theta}_k\mathbf{H}_{\mathrm{BR}}\mathbf{w}_k-1\right)^H$.
  It is worth noting that the optimal $W_k$ and $u_k$ can be derived easily by setting the first-order partial derivative of the objective functions in \eqref{eq_mmse_w}  w.r.t. $W_k$ and $u_k$ to zero. Specifically, the optimal solutions can be given by
\begin{align}\label{eq_opt_W_k_u_k}
	\begin{cases}
		W_k^\mathrm{opt}=E_k^{-1},&\\ u_k^\mathrm{opt}=\frac{\mathbf{h}_{\mathrm{r}k}^T\tilde{\boldsymbol{\Theta}}\boldsymbol{\Theta}_k\mathbf{H}_{\mathrm{BR}}\mathbf{w}_k}{\sum_{j=1}^{K}\left|\mathbf{h}_{\mathrm{r}k}^T\tilde{\boldsymbol{\Theta}}\boldsymbol{\Theta}_k\mathbf{H}_{\mathrm{BR}}\mathbf{w}_j\right|^2+\sigma_k^2}.&
	\end{cases}
\end{align}

The analysis above indicates that the lower bound of  $R_k$ in the $(q+1)$-th iteration can be obtained with the given $W_k^{(q)}$ and $u_k^{(q)}$. Note that the values of $W_k^{(q)}$ and $u_k^{(q)}$ are calculated by utilizing the optimal expressions in \eqref{eq_opt_W_k_u_k} with the achieved active beamforming, time delays and modulation frequency solutions in the $q$-th iteration. Hence, the lower bound of  $R_k$ in the $(q+1)$-th iteration can be expressed as
$
	R_k\geq\tilde{R}_k=\frac{\ln (W_k^{(q)})-W_k^{(q)}E_k^{(q)}+1}{\ln2}.
$
Thus, in $(q+1)$-th iteration, the original optimization problem can be transformed as
	\begin{align}\label{eq_orig_opti_mmse}
		&\max _{\{\mathbf{w}_k\}_{k=1}^K, \mathbf{f}, \boldsymbol{\tau}}~ \sum_{k=1}^{K} \omega_k \tilde{R}_k,\notag \\
		&\qquad\text { s.t. }~\eqref{eq_orig_opti_1}-\eqref{eq_orig_opti_3}.
	\end{align}
Next, we will utilize the alternating strategy to divide the above problem into three subproblems pertaining to the design of the active beamforming, time delay and modulation frequency in the next section.

\section{Algorithm Design}\label{sec:S4}
\subsection{Active beamforming design }This section focus on optimizing the active beamforming variables $\{\mathbf{w}_k\}_{k=1}^K$ with the given time-delay variable $\boldsymbol{\tau}$ and modulation frequency variable $\mathbf{f}$. In this case, the optimization problem \eqref{eq_orig_opti_mmse} can be simplified as the following subproblem
	\begin{align}\label{eq_active}
		&\max _{\{\mathbf{w}_k\}_{k=1}^K} \sum_{k=1}^{K}\omega_k\tilde{R}_k(\{\mathbf{w}_k\}_{k=1}^K),\notag \\
		&\quad\text { s.t. }~\sum_{k=1}^{K}\left\|\mathbf{w}_k\right\|^2\leq P_\mathrm{tmax}.
	\end{align}
It is important to highlight that the optimization problem \eqref{eq_active} is a convex quadratic constrained quadratic programming (QCQP) problem,  and it can be effectively solved by the existing convex optimization tools such as CVX. However, we have to face the high computational complexity challenge. To address this problem with a low complexity, we resort to the Lagrange multiplier method. Specifically, the Lagrange function of the problem \eqref{eq_active} can be expressed as
\begin{align}\label{eq_lagrange}
	\mathscr{L}=&-\sum_{k=1}^{K}\omega_k\tilde{R}_k(\{\mathbf{w}_k\}_{k=1}^K)+\mu\left(\sum_{k=1}^{K}\left|\mathbf{w}_k\right|^2-P_\mathrm{tmax}\right),\notag\\
	=&\sum_{k=1}^{K}\omega_k\Bigg[\frac{W_k^{(q)}}{\ln2}\Bigg(\sum_{j=1}^{K}\mathbf{w}_j^H\mathbf{A}_k\mathbf{w}_j-2\operatorname{Re}(\mathbf{a}_k^H\mathbf{w}_k)\notag\\
	&+\sigma_k^2\left(u_k^{(q)}\right)^Hu_k^{(q)}+1\Bigg)-\frac{\ln \left(W_k^{(q)}\right)+1}{\ln2}\Bigg]\notag\\
	&+\mu\left(\sum_{k=1}^{K}\left\|\mathbf{w}_k\right\|^2-P_\mathrm{tmax}\right),
\end{align}
where $\mu\geq 0$ is the Lagrange multiplier (dual variable) and
\begin{itemize}
	\item $\mathbf{A}_k=\mathbf{H}_{\mathrm{BR}}^H\boldsymbol{\Theta}_k^H\tilde{\boldsymbol{\Theta}}^H\mathbf{h}_{\mathrm{r}k}^*u_k^{(q)}\left(u_k^{(q)}\right)^H\mathbf{h}_{\mathrm{r}k}^T\tilde{\boldsymbol{\Theta}}\boldsymbol{\Theta}_k\mathbf{H}_{\mathrm{BR}}$.
\item $\mathbf{a}_k=\mathbf{H}_{\mathrm{BR}}^H\boldsymbol{\Theta}_k^H\tilde{\boldsymbol{\Theta}}^H\mathbf{h}_{\mathrm{r}k}^*u_k^{(q)}$.
\end{itemize}
 Thus, the first-order partial derivative of the Lagrange function w.r.t. $\mathbf{w}_j$ can be derived as
\begin{align}
\frac{\partial \mathscr{L}}{\partial \mathbf{w}_j}=2\sum_{k=1}^{K}\frac{\omega_kW_k^{(q)}\mathbf{A}_k}{\ln2}\mathbf{w}_j-2\frac{\omega_jW_j^{(q)}\mathbf{a}_j}{\ln2}+2\mu\mathbf{w}_j.
\end{align}
Let $\frac{\partial \mathscr{L}}{\partial \mathbf{w}_j}=0$, we can derive the optimal $\mathbf{w}_j$ associated with the dual variable $\mu$, which is given by
\begin{align}\label{eq_w_k_opt_dual}
	\mathbf{w}^\mathrm{opt}_j(\mu)=\left(\widehat{\mathbf{A}}+\mu\mathbf{I}_{N_\mathrm{t}\times N_\mathrm{t}}\right)^{\dagger}\frac{\omega_jW_j^{(q)}\mathbf{a}_j}{\ln2},~ j\in\mathcal{K},
\end{align}
where $\widehat{\mathbf{A}}=\sum_{k=1}^{K}\frac{\omega_kW_k^{(q)}\mathbf{A}_k}{\ln2}$.
\begin{remark}
	\normalfont{Note that the feasible solution strictly satisfying the constraint $\sum_{k=1}^{K}\left|\mathbf{w}_k\right|^2- P_\mathrm{tmax}< 0$ can always be found. Thus,  the optimization problem \eqref{eq_active} meets the Slater's condition, which means that the dual gap between problem \eqref{eq_active} and its dual problem is zero. Consequently, we can achieve the optimal solution of problem \eqref{eq_active} by addressing its dual problem and obtaining the optimal dual variable \cite{pan2020intelligent}.}
\end{remark}

Let $\mu^\mathrm{opt}$ represent the optimal dual variable, which is required to satisfy the complementary slackness condition associated with the power constraint:
\begin{align}
	\mu^\mathrm{opt}\left(\sum_{k=1}^{K}\left\|\mathbf{w}^\mathrm{opt}_k(\mu^\mathrm{opt})\right\|^2-P_\mathrm{tmax}\right)=0.
\end{align}
In particular, if the following condition holds:
\begin{align}\label{eq_comple_slack}
\sum_{k=1}^{K}\left\|\mathbf{w}^\mathrm{opt}_k(\mu^\mathrm{opt})\right\|^2-P_\mathrm{tmax}<0,
\end{align}
we have $\mu^\mathrm{opt}=0$ and the optimal solution of the problem \eqref{eq_active} is $\mathbf{w}^\mathrm{opt}_k(0)=\widehat{\mathbf{A}}^{\dagger}\frac{\omega_kW_k^{(q)}\mathbf{a}_k}{\ln2},~ k\in\mathcal{K}$. Otherwise we have
\begin{align}\label{eq_mu_P}
	P(\mu^\mathrm{opt})=\sum_{k=1}^{K}\left\|\mathbf{w}^\mathrm{opt}_k(\mu^\mathrm{opt})\right\|^2=P_\mathrm{tmax}.
\end{align}
Note that the intricate form of $\mathbf{w}^\mathrm{opt}_k$ in \eqref{eq_w_k_opt_dual} makes it difficult to obtain the solution of $\mu^\mathrm{opt}$ directly through \eqref{eq_mu_P}. Although some numerical methods such as the bisection search method can be leveraged to achieve the $\mu^\mathrm{opt}$, understanding the monotonic nature of the function $P(\mu^\mathrm{opt})$ is essential. Actually, it is difficult to prove the monotonic nature of the function $P(\mu^\mathrm{opt})$ based on its complex explicit expression. Next, we will leverage the following theorem to prove the monotonicity of the function $P(\mu^\mathrm{opt})$.
\begin{theorem}\label{th1}
	The function $P(\mu^\mathrm{opt})$ is a monotonically decreasing function w.r.t. $\mu^\mathrm{opt}$.
\end{theorem}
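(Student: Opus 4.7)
The plan is to exhibit $P(\mu)$ as a sum of scalar functions each of which is manifestly decreasing in $\mu$, by diagonalizing the matrix $\widehat{\mathbf{A}}$. First I would observe that each $\mathbf{A}_k = \mathbf{H}_{\mathrm{BR}}^H\boldsymbol{\Theta}_k^H\tilde{\boldsymbol{\Theta}}^H\mathbf{h}_{\mathrm{r}k}^* u_k^{(q)}(u_k^{(q)})^H \mathbf{h}_{\mathrm{r}k}^T \tilde{\boldsymbol{\Theta}}\boldsymbol{\Theta}_k\mathbf{H}_{\mathrm{BR}}$ is of the form $\mathbf{B}_k^H \mathbf{B}_k$, hence Hermitian positive semi-definite. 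Since $\omega_k W_k^{(q)}/\ln 2 > 0$, the matrix $\widehat{\mathbf{A}} = \sum_k \frac{\omega_k W_k^{(q)}}{\ln 2} \mathbf{A}_k$ is also Hermitian positive semi-definite. For $\mu>0$, $\widehat{\mathbf{A}}+\mu \mathbf{I}_{N_{\mathrm{t}}\times N_{\mathrm{t}}}$ is then positive definite, so the pseudo-inverse in \eqref{eq_w_k_opt_dual} reduces to a genuine inverse.

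Next I would apply the spectral theorem to write $\widehat{\mathbf{A}} = \mathbf{U}\boldsymbol{\Lambda}\mathbf{U}^H$ with $\mathbf{U}$ unitary and $\boldsymbol{\Lambda} = \mathrm{Diag}(\lambda_1,\dots,\lambda_{N_{\mathrm{t}}})$, $\lambda_i \ge 0$. Then
\begin{align}
(\widehat{\mathbf{A}}+\mu\mathbf{I})^{-1} = \mathbf{U}\,\mathrm{Diag}\!\left(\tfrac{1}{\lambda_i+\mu}\right)\mathbf{U}^H.
\end{align}
Defining the rotated vector $\tilde{\mathbf{a}}_k = \mathbf{U}^H \mathbf{a}_k$ with entries $\tilde a_{k,i}$, the squared norm of $\mathbf{w}_k^{\mathrm{opt}}(\mu)$ becomes
\begin{align}
\|\mathbf{w}_k^{\mathrm{opt}}(\mu)\|^2
=\frac{\omega_k^{2}\bigl(W_k^{(q)}\bigr)^{2}}{(\ln 2)^{2}}\,\mathbf{a}_k^H(\widehat{\mathbf{A}}+\mu\mathbf{I})^{-2}\mathbf{a}_k
=\frac{\omega_k^{2}\bigl(W_k^{(q)}\bigr)^{2}}{(\ln 2)^{2}}\sum_{i=1}^{N_{\mathrm{t}}}\frac{|\tilde a_{k,i}|^{2}}{(\lambda_i+\mu)^{2}}.
\end{align}

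Summing over $k$ gives
\begin{align}
P(\mu) = \sum_{k=1}^{K}\sum_{i=1}^{N_{\mathrm{t}}}\frac{\omega_k^{2}\bigl(W_k^{(q)}\bigr)^{2}|\tilde a_{k,i}|^{2}}{(\ln 2)^{2}(\lambda_i+\mu)^{2}}.
\end{align}
Each summand is a non-negative constant divided by $(\lambda_i+\mu)^2$, whose derivative w.r.t.\ $\mu$ is $-2/(\lambda_i+\mu)^3 < 0$ for every $\mu \ge 0$ (using $\lambda_i\ge 0$). Hence $P(\mu)$ is a sum of strictly decreasing functions of $\mu$, and the theorem follows by interpreting the statement with $\mu^{\mathrm{opt}}$ as a free variable ranging over $[0,\infty)$.

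The only subtlety worth flagging is the pseudo-inverse at $\mu=0$ when $\widehat{\mathbf{A}}$ is rank-deficient: in that case one restricts the sum above to the indices $i$ with $\lambda_i > 0$, and the same monotonicity argument goes through on the range space; continuity from the right at $\mu = 0$ is then immediate. This is the one bookkeeping point that requires care, but it does not affect the conclusion since the operating regime of interest in \eqref{eq_mu_P} has $\mu^{\mathrm{opt}}>0$.
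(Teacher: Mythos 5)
Your proof is correct, but it takes a genuinely different route from the paper. The paper never touches the spectral structure of $\widehat{\mathbf{A}}$: it uses a purely variational exchange argument. For $\mu_1>\mu_2>0$ it writes the two optimality inequalities $\mathscr{L}(\{\mathbf{w}^\mathrm{opt}_k(\mu_1)\},\mu_1)\leq\mathscr{L}(\{\mathbf{w}^\mathrm{opt}_k(\mu_2)\},\mu_1)$ and $\mathscr{L}(\{\mathbf{w}^\mathrm{opt}_k(\mu_2)\},\mu_2)\leq\mathscr{L}(\{\mathbf{w}^\mathrm{opt}_k(\mu_1)\},\mu_2)$, adds them so that the objective terms cancel, and is left with $(\mu_1-\mu_2)P(\mu_1)\leq(\mu_1-\mu_2)P(\mu_2)$, hence $P(\mu_1)\leq P(\mu_2)$. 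That argument is shorter, needs only that $\mathbf{w}^\mathrm{opt}_k(\mu)$ globally minimizes the (convex) Lagrangian, and would survive any change in the closed form \eqref{eq_w_k_opt_dual}; on the other hand it only delivers the non-increasing property and says nothing about continuity or strictness. Your eigendecomposition, $P(\mu)=\sum_{k,i}\omega_k^{2}(W_k^{(q)})^{2}|\tilde a_{k,i}|^{2}/\bigl((\ln 2)^{2}(\lambda_i+\mu)^{2}\bigr)$, buys more: it shows $P$ is continuous, convex, strictly decreasing whenever some $\tilde a_{k,i}\neq 0$, and vanishes as $\mu\to\infty$, which is exactly what one needs to justify that the bisection search in Algorithm 1 finds the unique root of $P(\mu)=P_\mathrm{tmax}$ — a point the paper's proof leaves implicit. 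Two small remarks: your positive-semidefiniteness claim is immediate since $\mathbf{A}_k=\mathbf{a}_k\mathbf{a}_k^H$ is rank one (with $\mathbf{a}_k$ exactly the vector defined after \eqref{eq_lagrange}), and "strictly decreasing" should be weakened to "non-increasing" in the degenerate case $\mathbf{a}_k=\mathbf{0}$ for all $k$; your handling of the pseudo-inverse at $\mu=0$ is the right bookkeeping and is consistent with the paper, which only invokes monotonicity for $\mu>0$.
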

\begin{proof}
		The proof is given in Appendix \ref{append 1}.
\end{proof}
\vspace{2mm}
According to the Theorem \ref{th1}, the bisection search method can be utilized to find the only solution $\mu^\mathrm{opt}$ that makes $P(\mu^\mathrm{opt})=P_\mathrm{tmax}$, and further achieve the optimal active beamforming. The step-by-step instructions for solving the optimization problem \eqref{eq_active} can be found in Algorithm 1.
 \begin{center}
	\begin{tabular}{p{8.5cm}}
		\toprule[2pt]
		\textbf{Algorithm 1:}  Bisection Search Method for Solving Active Beamforming Subproblem \eqref{eq_active}   \\
		\midrule[1pt]
		1: Initialize the tolerance accuracy $\epsilon$ and the bound of the\\
		\quad dual variable $\mu_\mathrm{up}$ and $\mu_\mathrm{low}$.\\
		2: If condition \eqref{eq_comple_slack} holds, $\mu^\mathrm{opt}=0$. Otherwise, go to steps\\\quad 3-6 to solve $\mu^\mathrm{opt}$ in \eqref{eq_mu_P}.\\
		3: \textbf{Repeat} \\
		4: \quad Calculate $\mu^\mathrm{opt}=\frac{\mu_\mathrm{up}+\mu_\mathrm{low}}{2}$ and $P(\mu^\mathrm{opt})$.\\
		5: \quad if $P(\mu^\mathrm{opt})\geq P_\mathrm{tmax}$, $ \mu_\mathrm{low}\leftarrow\mu^\mathrm{opt}$. Otherwise, $\mu_\mathrm{up}\leftarrow$\\\qquad $\mu^\mathrm{opt}$. \\
		6: \textbf{Until} $|\mu_\mathrm{up}-\mu_\mathrm{low}|\leq \epsilon$\\
		7: \textbf{Output:} the optimal dual variable $\mu^\mathrm{opt}$ and the optimal\\\quad active beamforming $\mathbf{w}^\mathrm{opt}_k(\mu^\mathrm{opt})$ through \eqref{eq_w_k_opt_dual}. \\
		\bottomrule[2pt]
	\end{tabular}
\end{center}
\subsection{Passive Beamforming Design for Time Delay }
The time delays are designed in this section with the obtained active beamforming variables and the given modulation frequency. Specifically, the subproblem associated with the time-delay variables can be expressed as
	\begin{align}\label{eq_time_delay}
		&\max _{\boldsymbol{\tau}}~~ \sum_{k=1}^{K} \omega_k\tilde{R}_k(\boldsymbol{\tau}),\notag \\
		&~\text { s.t. }~~\tau_i\geq 0, ~i\in\mathscr{I}.
	\end{align}
Note that it is difficult to address this particular subproblem due to the non-convex nature of $\tilde{R}_k(\boldsymbol{\tau})$ w.r.t. $\boldsymbol{\tau}$ and the coupling between  $\boldsymbol{\tau}$ and $\tilde{\theta}_i=e^{-j2\pi g{f}_l\tau_i}, ~ i\in\mathscr{I}$.
\begin{remark}\label{re_4}
	\normalfont{From the expression of $\tilde{\theta}_i=e^{-j2\pi g{f}_l\tau_i}, ~ i\in\mathscr{I}$, it is worth noting that regardless the value of ${f}_l, l\in\mathcal{L}$, each element in the FD-RIS has the capability to induce a specific phase-shift on the reflected harmonic signal by introducing an appropriate time delay. Therefore, we can initially design the reflection coefficients generated by the time delays in a holistic manner. \textcolor{blue}{In other words,  $\tilde{\theta}_i = e^{-j2\pi g f_l \tau_i}, ~ i\in\mathscr{I},$ can be treated as the optimization variable and designed, rather than focusing separately on the design of $\tau_i, ~ i\in\mathscr{I}$.
	 Subsequently, it is straightforward to derive the corresponding time delays $\tau_i, ~ i\in\mathscr{I}$ with given ${f}_l, ~l\in\mathcal{L}$ based on the obtained reflection coefficients $\tilde{\theta}_i,~ i\in\mathscr{I}$ in the final step of the iteration.}}
\end{remark}

Thus, the optimization problem \eqref{eq_time_delay} can be equivalently transformed as
	\begin{align}\label{eq_time_delay_trans}
		&\max _{\boldsymbol{\vartheta}}~~  f(\boldsymbol{\vartheta})=-\boldsymbol{\vartheta}^H\mathbf{B}\boldsymbol{\vartheta}+2\operatorname{Re}(\boldsymbol{\vartheta}^H\mathbf{b})+C,\notag \\
		&~\text { s.t. }~~|\tilde{\theta}_i^*|=1,~ \angle\tilde{\theta}^*_i\in[0, 2\pi],~\forall i\in\mathscr{I},
	\end{align}
considering
\begin{align}
	\sum_{k=1}^{K} \tilde{R}_k&=\sum_{k=1}^{K}\omega_k\Bigg(\frac{-{W}_k^{(q)}}{\ln2}\Big(\boldsymbol{\vartheta}^H\mathbf{B}_k\boldsymbol{\vartheta}-2\operatorname{Re}(\boldsymbol{\vartheta}^H\mathbf{b}_k)+\notag\\
	&1+\left({u}_k^{(q)}\right)^H{u}_k^{(q)}\sigma_k^2\Big)+\frac{\ln ({W}_k^{(q)})+1}{\ln2}\Bigg)\notag\\
	&=-\boldsymbol{\vartheta}^H\mathbf{B}\boldsymbol{\vartheta}+2\operatorname{Re}(\boldsymbol{\vartheta}^H\mathbf{b})+C,
\end{align}
where
\begin{itemize}
	\item  $\mathbf{B}_k\!=\!\left({u}_k^{(q)}\right)^H\mathbf{H}_{\mathrm{r}k}\boldsymbol{\Theta}_k\mathbf{H}_\mathrm{BR}\sum\limits_{j=1}^{K}\mathbf{w}_j\mathbf{w}_j^H\mathbf{H}_\mathrm{BR}^H\boldsymbol{\Theta}_k^H\mathbf{H}_{\mathrm{r}k}^H{u}_k^{(q)}$,
	\vspace{2mm}
	\item $\mathbf{b}_k=\left({u}_k^{(q)}\right)^H\mathbf{H}_{\mathrm{r}k}\boldsymbol{\Theta}_k\mathbf{H}_\mathrm{BR}\mathbf{w}_k$, $\mathbf{H}_{\mathrm{r}k}=\operatorname{Diag}(\mathbf{h}_{\mathrm{r}k})$,
	\vspace{2mm}
	\item $\boldsymbol{\vartheta}^*=\operatorname{diag}(\tilde{\boldsymbol{\Theta}})$, $\mathbf{B}=\sum\limits_{k=1}^{K}\frac{\omega_k{W}_k^{(q)}\mathbf{B}_k}{\ln2}$, $\mathbf{b}=\sum\limits_{k=1}^{K}\frac{\omega_k{W}_k^{(q)}\mathbf{b}_k}{\ln2}$,
	\vspace{2mm}
	\item
	$C=-\sum\limits_{k=1}^{K}\omega_k\frac{{W}_k^{(q)}\left(1+\left({u}_k^{(q)}\right)^H{u}_k^{(q)}\sigma_k^2\right)-\ln ({W}_k^{(q)})-1}{\ln2}$.
\end{itemize}
It is worth noting problem \eqref{eq_time_delay_trans} is still a non-convex optimization problem because of the non-convexity of the constant modulus constraints and the objective function.
 To tackle this issue, the Riemannian conjugate gradient algorithm \cite{alhujaili2019transmit} is leveraged to effectively solve this optimization problem with low computing complexity and fast convergence. Specifically, the feasible region of problem \eqref{eq_time_delay_trans} can be regarded as the product of $I$ complex circles, which is denoted as $\mathcal{S}\triangleq\{\boldsymbol{\vartheta}\in\mathbb{C}^{I\times 1}: |\tilde{\theta}^*_1|=\cdots=|\tilde{\theta}^*_i|=\cdots=|\tilde{\theta}^*_I|=1\}$. The core concept of the  Riemannian conjugate gradient algorithm can be summarized as follows: First, calculate the Riemannian gradient within the defined manifold space $\mathcal{S}$ to determine the direction of the steepest descent on the manifold, resembling the concept of gradient in the Euclidean space. Subsequently, update the next iteration point using the Riemannian gradient. As the Riemannian gradient lies on the tangent space of the manifold, denoted as $\mathcal{T}_{\boldsymbol{\vartheta}}\mathcal{S}$,  it is necessary to map the updated iteration point back to the manifold space.

The Riemannian gradient of $f(\boldsymbol{\vartheta})$ at point $\boldsymbol{\vartheta}^{(e)}$ in the $e$-th iteration can be derived by orthogonally mapping the gradient in Euclidean space to tangent space, which is given by \cite{absil2008optimization}
\begin{align}\label{eq_Riemannian_gradient}
	&\operatorname{grad}f(\boldsymbol{\vartheta}^{(e)})=\mathscr{M}_{\mathcal{T}_{\boldsymbol{\vartheta}^{(e)}}\mathcal{S}}\left(\nabla f\left(\boldsymbol{\vartheta}^{(e)}\right)\right)
\notag\\
&=\nabla f\left(\boldsymbol{\vartheta}^{(e)}\right)-\operatorname{Re}\left(\nabla f\left(\boldsymbol{\vartheta}^{(e)}\right)\circ \left(\boldsymbol{\vartheta}^{(e)}\right)^*\right)\circ \boldsymbol{\vartheta}^{(e)},
\end{align}
where $\nabla f(\boldsymbol{\vartheta}^{(e)})$ is the gradient of $f(\boldsymbol{\vartheta})$ in Euclidean space, which is expressed as
	$\nabla f(\boldsymbol{\vartheta}^{(e)})=-2\mathbf{B}\boldsymbol{\vartheta}^{(e)}+2\mathbf{b}.$
Thus, the search direction for the next iteration point can be given by
\begin{align}\label{eq_search_direction}
	\mathscr{D}^{(e)}=\operatorname{grad}f\left(\boldsymbol{\vartheta}^{(e)}\right)+\psi_\mathrm{PR}^{(e)}\mathscr{M}_{\mathcal{T}_{\boldsymbol{\vartheta}^{(e)}}\mathcal{S}}\left(\mathscr{D}^{(e-1)}\right),
\end{align}
where $\psi_\mathrm{PR}^{(e)}$ represents the Polak-Ribiere parameter \cite{shewchuk1994introduction} whose expression can be expressed as \eqref{eq_Polak-Ribiere}, shown at the top of this page. The operator $\langle\cdot, \cdot\rangle$ denotes the inner product operation.
\begin{figure*}[t]
	\begin{align}\label{eq_Polak-Ribiere}
		\psi_\mathrm{PR}^{(e)}=\mathcal{R}\left(\frac{\left\langle \operatorname{grad}f\left(\boldsymbol{\vartheta}^{(e)}\right), \operatorname{grad}f\left(\boldsymbol{\vartheta}^{(e)}\right)-\mathscr{M}_{\mathcal{T}_{\boldsymbol{\vartheta}^{(e)}}\mathcal{S}}\left(\operatorname{grad}f\left(\boldsymbol{\vartheta}^{(e-1)}\right)\right)\right\rangle}{\left\langle\operatorname{grad}f\left(\boldsymbol{\vartheta}^{(e-1)}\right),\operatorname{grad}f\left(\boldsymbol{\vartheta}^{(e-1)}\right)\right\rangle}\right).
	\end{align}
	\hrule
\end{figure*}

Hence, the updated rule of variable $\boldsymbol{\vartheta}$ can be expressed as
\begin{align}\label{eq_update_rule}
	\boldsymbol{\vartheta}^{(e+1)}=\mathscr{P}\Big(\boldsymbol{\vartheta}^{(e)}+\eta^{(e)}\mathscr{D}^{(e)}\Big),
\end{align}
where  $\mathscr{P}(\cdot)$ is the projection operator, which maps the updated $\boldsymbol{\vartheta}$ to the feasible region $\mathcal{S}$ by element-wise retraction, i.e., $\frac{(\boldsymbol{\vartheta})_i}{|(\boldsymbol{\vartheta})_i|}, ~i\in\mathscr{I}$. $\eta^{(e)}$ denotes the Armijo step in the $(e+1)$-th iteration.
To obtain a suitable step size, the Armijo criteria \cite{shewchuk1994introduction} is usually leveraged, which is given by
\begin{align}\label{eq_Armijo_criteria}
	f\left(\mathscr{P}\left(\boldsymbol{\vartheta}^{(e)}+\eta^{(e)}\mathscr{D}^{(e)}\right)\right)&\geq f\left(\boldsymbol{\vartheta}^{(e)}\right)+\notag\\
	&c_1\eta^{(e)}\operatorname{grad}f\left(\vartheta^{(e)}\right)^H\mathscr{D}^{(e)},
\end{align}
where $c_1\in(0, 1)$ is a parameter that controls the amount of decrease in the objective function.
The details for solving the optimization problem \eqref{eq_time_delay_trans} utilizing the  Riemannian conjugate gradient algorithm are presented in Algorithm 2.
\begin{center}
	\begin{tabular}{p{8.5cm}}
		\toprule[2pt]
		\textbf{Algorithm 2:}  Riemannian Conjugate Gradient Algorithm for Solving Problem \eqref{eq_time_delay_trans}   \\
		\midrule[1pt]
		1: Initialize $\boldsymbol{\vartheta}^{(0)}$, tolerance accuracy $\tilde{\epsilon}$ and step size $\eta^{(0)}$;\\\quad Set iteration index $e=0$.\\
		2: \textbf{Repeat} \\
		3: \quad Calculate objective function $f(\boldsymbol{\vartheta}^{(e)})$ and Riemannian\\\qquad gradient $\operatorname{grad}f(\boldsymbol{\vartheta}^{(e)})$ utilizing \eqref{eq_Riemannian_gradient}.\\
		4: \quad Calculate search direction $\mathscr{D}^{(e)}$ and Polak-Ribiere\\\qquad parameter $\psi_\mathrm{PR}^{(e)}$ using \eqref{eq_search_direction} and \eqref{eq_Polak-Ribiere}, respectively.\\
		5: \quad Search the appropriate step size utilizing the Armijo \\\qquad criteria, update $\boldsymbol{\vartheta}^{(e+1)}$ according to \eqref{eq_update_rule} and calculate\\\qquad the corresponding objective function $f(\boldsymbol{\vartheta}^{(e+1)})$;\\\qquad Let $e\leftarrow e+1$. \\%in \eqref{eq_Armijo_criteria}
		6: \textbf{Until} $|f(\boldsymbol{\vartheta}^{(e+1)})-f(\boldsymbol{\vartheta}^{(e)})|\leq \tilde{\epsilon}$.\\
		7: \textbf{Output:} the optimal $\boldsymbol{\vartheta}^\mathrm{opt}$. \\
		\bottomrule[2pt]
	\end{tabular}
\end{center}
\subsection{Passive Beamforming Design for Modulation Frequency}
In this section, we aim to optimize the modulation frequency $\mathbf{f}$ with the obtained values of  $\mathbf{w}_k, k\in\mathcal{K}$, and $\boldsymbol{\tau}$. The subproblem concerning the modulation frequency variables is described as:
	\begin{align}\label{eq_modulation_frequency}
		&\min _{\mathbf{f}}~~ \tilde{g}(\mathbf{f})=\sum_{k=1}^{K}\boldsymbol{\vartheta}_k^H\mathbf{D}_k\boldsymbol{\vartheta}_k-2\operatorname{Re}(\boldsymbol{\vartheta}_k^H\mathbf{d}_k)-\hat{C},\notag \\
		&~\text { s.t. }~f_{\min}\leq {f}_l\leq f_{\max},~\forall l\in\mathcal{L},
	\end{align}
taking account of
\begin{align}
	\sum_{k=1}^{K}\omega_k\tilde{R}_k=\sum_{k=1}^{K}-\boldsymbol{\vartheta}_k^H\mathbf{D}_k\boldsymbol{\vartheta}_k+2\operatorname{Re}(\boldsymbol{\vartheta}_k^H\mathbf{d}_k)+\hat{C}
\end{align}
where
\begin{itemize}
	\item  $\mathbf{D}_k=\frac{\omega_k{W}_k^{(q)}}{\ln2}\left({u}_k^{(q)}\right)^H\mathbf{H}_{\mathrm{r}k}\tilde{\boldsymbol{\Theta}}\mathbf{H}_\mathrm{BR}\sum\limits_{j=1}^{K}\mathbf{w}_j\mathbf{w}_j^H\times\\
	\mathbf{H}_\mathrm{BR}^H\tilde{\boldsymbol{\Theta}}^H\mathbf{H}_{\mathrm{r}k}^H{u}_k^{(q)}$,
	\vspace{2mm}
	\item $\mathbf{d}_k=\frac{\omega_k{W}_k^{(q)}}{\ln2}\left({u}_k^{(q)}\right)^H\mathbf{H}_{\mathrm{r}k}\tilde{\boldsymbol{\Theta}}\mathbf{H}_\mathrm{BR}\mathbf{w}_k$,
	\vspace{2mm}
	\item $\mathbf{H}_{\mathrm{r}k}=\operatorname{Diag}(\mathbf{h}_{\mathrm{r}k})$, $\boldsymbol{\vartheta}_k^*=\operatorname{diag}(\boldsymbol{\Theta}_k)$,
	\vspace{2mm}
	\item
	$\hat{C}=-\sum\limits_{k=1}^{K}\omega_k\frac{{W}_k^{(q)}\left(1+\left({u}_k^{(q)}\right)^H{u}_k^{(q)}\sigma_k^2\right)-\ln ({W}_k^{(q)})-1}{\ln2}$.
\end{itemize}
Although the objective function $\tilde{g}$ is convex w.r.t. $\boldsymbol{\vartheta}_k$, the interconnection between  ${f}_l, ~l\in\mathcal{L},$ and $\theta^k_i=e^{-j2\pi g{f}_l\frac{d_{i_z, i_y}^{\mathrm{r}k}}{c}}, ~k\in\mathcal{K}, i\in\mathscr{I}$ makes the objective function $\tilde{g}$  non-convex concerning $\mathbf{f}$.
Thus, it is difficult to handle this optimization problem directly.
To effectively tackle this challenge, the GCMMA algorithm  \cite{svanberg2002class, svanberg2007mma} is adopted. The core principle of the GCMMA algorithm is to utilize the first-order gradients of the objective function and constraints w.r.t. the optimized variables to construct the MMA convex approximation subproblem, which can approximate the original problem. Specifically, in the $z$-th iteration, the MMA convex approximation subproblem can be given by
	\begin{align}\label{eq_modulation_frequency_GCMMA}
		&\min _{\mathbf{f}}~~ \widehat{g}(\mathbf{f}|\mathbf{f}^{(z-1)}),\notag \\
		&~\text { s.t. }~f_{l, \mathrm{low}}^{(z)}\leq {f}_l\leq f_{l, \mathrm{up}}^{(z)},~\forall l\in\mathcal{L},
	\end{align}
where $\widehat{g}(\mathbf{f}|\mathbf{f}^{(z-1)})$ denotes the MMA convex approximation of the objective function, which is given by
\begin{align}\label{eq_MMA_approxi}
	\widehat{g}(\mathbf{f}|\mathbf{f}^{(z-1)})=&\sum_{l=1}^{L} \frac{p^{(z)}_l}{u_l^{(z)}-{f}_l}+\frac{q^{(z)}_l}{{f}_l-o_l^{(z)}}+a^{(z)},
\end{align}
with
\begin{itemize}
	\item $p_l^{(z)}=\left(u_l^{(z)}-{f}_l^{(z-1)}\right)^2 y_l^{(z)}$, \\$q_l^{(z)}=\left({f}_l^{(z-1)}-o_l^{(z)}\right)^2 y_l^{(z)}$.
	\vspace{2mm}
	\item $y^{(z)}_l=\frac{\rho^{(z)}}{f_{\max}-f_{\min}}+1.001\left(\frac{\partial \tilde{g}}{\partial {f}_l}\left(\mathbf{f}^{(z-1)}\right)\right)^+\\+0.001\left(\frac{\partial \tilde{g}}{\partial {f}_l}\left(\mathbf{f}^{(z-1)}\right)\right)^-$.
		\vspace{2mm}
	\item $u_l^{(z)}=\begin{cases}
		{f}_l^{(z-1)}+f_{\max}-f_{\min}, & z\leq 2,\\
		{f}_l^{(z-1)}+\gamma^{(z)}\left(u_l^{(z-1)}-{f}_l^{(z-2)}\right), & z\geq 3,\\
	\end{cases}$ denotes the upper asymptote in the $z$-th iteration.
	\vspace{2mm}
\item $o_l^{(z)}=\begin{cases}
{f}_l^{(z-1)}-f_{\max}+f_{\min}, & z\leq 2,\\
{f}_l^{(z-1)}-\gamma^{(z)}\left({f}_l^{(z-2)}-o_l^{(z-1)}\right), & z\geq 3,\\
\end{cases}$ denotes the lower asymptote in the $z$-th iteration.
	\vspace{2mm}
\item $a^{(z)}=\tilde{g}\left(\mathbf{f}^{(z-1)}\right)-\sum_{l=1}^{L} \frac{p^{(z)}_l}{u_l^{(z)}-{f}_l^{(z-1)}}-\frac{q^{(z)}_l}{{f}_l^{(z-1)}-o_l^{(z)}}$.
	\vspace{2mm}
\item $\rho^{(z)}=\frac{1}{10L}\operatorname{abs}\left(\frac{\partial \tilde{g}}{\partial \mathbf{f}}\left(\mathbf{f}^{(z-1)}\right)\right)^T\left(\mathbf{f}_{\max}-\mathbf{f}_{\min}\right)$ is the conservative factor, which can control the conservative property of the MMA convex approximation.
	\vspace{2mm}
\item $f_{l, \mathrm{low}}^{(z)}=\max\{f_{\min}, o_l^{(z)}+0.1\left({f}_l^{(z-1)}-o_l^{(z)}\right), {f}_l^{(z-1)}-0.5\left(f_{\max}-f_{\min}\right)\}$ represents the lower bound of ${f}_l$ in MMA convex approximation subproblem.
	\vspace{2mm}
\item $ f_{l, \mathrm{up}}^{(z)}=\min\{f_{\max}, u_l^{(z)}-0.1\left(u_l^{(z)}-{f}_l^{(z-1)}\right), {f}_l^{(z-1)}+0.5\left(f_{\max}-f_{\min}\right)\}$ represents the upper bound of ${f}_l$ in MMA convex approximation subproblem.
	\vspace{2mm}
\end{itemize}
In addition, the first-order gradients of the objective function w.r.t. the optimized variable, i.e., $\frac{\partial\tilde{g}}{\partial \mathbf{f}}$, can be derived as \footnote{\textcolor{blue}{As presented in Remark \ref{re_4}, during the alternating optimization process, the entire matrix $\tilde{\boldsymbol{\Theta}}$ can be treated as the optimization variable, without separately considering the modulation frequency variables contained within it. Thus, it can be treated as a known system parameter during the optimization of modulation frequencies.}}
\begin{align}
	&\mathcal{I}_l\left(\frac{\partial \tilde{g}}{\partial \mathbf{f}}\right)=\sum_{k=1}^{K}2\mathcal{R}\left(\frac{\partial\boldsymbol{\vartheta}_k^H}{\partial {f}_l}\mathbf{D}_k\boldsymbol{\vartheta}_k-\frac{\partial\boldsymbol{\vartheta}_k^H}{\partial {f}_l}\mathbf{d}_k\right),
\end{align}	
which represents the $l$-th element of $\frac{\partial\tilde{g}}{\partial \mathbf{f}}$. In addition, the $v$-th element of  $\frac{\partial\boldsymbol{\vartheta}_k}{\partial {f}_l}$ is given as
	\begin{itemize}
\item $\mathcal{I}_{v}\left(\frac{\partial\boldsymbol{\vartheta}_k}{\partial {f}_l}\right)=\begin{cases}
		A_0j2\pi g\left(\frac{d^{\mathrm{r}k}_v}{c}-t\right)\times\\e^{j2\pi g{f}_l\left(\frac{d^{\mathrm{r}k}_v}{c}-t\right)+\phi_0}, & v\in\mathcal{V},\\
		0, & otherwise,
	\end{cases}$
\end{itemize}
where we have $\mathcal{V}\triangleq\{(r-1)SMN+\tilde{s}N+(m-1)SN+n\}_{m\in\mathcal{M}, n\in\mathcal{N}}$ with $r=\lceil l/S\rceil, \tilde{s}=\operatorname{mod}(l-1, S)$. Note that $d^{\mathrm{r}k}_v$ denotes the distance between the $v$-th element of the FD-RIS and user $k$, which can be mapped as $d^{\mathrm{r}k}_{i_z, i_y}$ easily.

Considering that the MMA subproblem \eqref{eq_modulation_frequency_GCMMA} is a simple optimization problem without QoS constraints, we can analytically derive the close-form expression of its optimal solution through setting the $\frac{\partial \widehat{g}}{\partial {f}_l}(\mathbf{f}|\mathbf{f}^{(z-1)})=0$. Specifically, $\frac{\partial \widehat{g}}{\partial {f}_l}(\mathbf{f}|\mathbf{f}^{(z-1)})$ can be derived as
\begin{align}
	\frac{\partial \widehat{g}}{\partial {f}_l}(\mathbf{f}|\mathbf{f}^{(z-1)})=\frac{p_l^{(z)}}{\left(u_l^{(z)}-{f}_l\right)^2}-\frac{q_l^{(z)}}{\left({f}_l-o_l^{(z)}\right)^2}.
\end{align}
Let $\frac{\partial \widehat{g}}{\partial {f}_l}(\mathbf{f}|\mathbf{f}^{(z-1)})=0$, we can obtain the optimal solutions of the problem \eqref{eq_modulation_frequency_GCMMA}, which is given by
\begin{align}\label{eq_MMA_opti_solution}
	f_l^\mathrm{opt}=\begin{cases}
		f_{l, \mathrm{up}}^{(z)}, & \breve{f}_l>f_{l, \mathrm{up}}^{(z)},\\
		\breve{f}_l, & f_{l, \mathrm{low}}^{(z)}\leq\breve{f}_l\leq f_{l, \mathrm{up}}^{(z)},\\
		f_{l, \mathrm{low}}^{(z)}, &	\breve{f}_l<f_{l, \mathrm{low}}^{(z)},
	\end{cases}
\end{align}
where
 $\breve{f}_l=\frac{u_l^{(z)}\sqrt{\frac{q_l^{(z)}}{p_l^{(z)}}}+o_l^{(z)}}{1+\sqrt{\frac{q_l^{(z)}}{p_l^{(z)}}}}$.

\begin{remark}
\normalfont{It is important to highlight that in order to ensure the objective function value is decreasing monotonically, the constructed approximation subproblem of the original problem in the $z$-th iteration should meet the following conditions \cite{razaviyayn2013unified}: $\tilde{g}(\mathbf{f}^{(z-1)})=\widehat{g}(\mathbf{f}^{(z-1)}|\mathbf{f}^{(z-1)})$, $\frac{\partial\tilde{g}(\mathbf{f})}{\partial\mathbf{f}}|_{\mathbf{f}=\mathbf{f}^{(z-1)}}=\frac{\partial\widehat{g}(\mathbf{f}|\mathbf{f}^{(z-1)})}{\partial\mathbf{f}}|_{\mathbf{f}=\mathbf{f}^{(z-1)}}$ and $\widehat{g}(\mathbf{f}^{(z)}|\mathbf{f}^{(z-1)})\geq \tilde{g}(\mathbf{f}^{(z)})$. Unlike the successive convex approximation (SCA) method, the constructed MMA approximated objective function can not always satisfy the third condition, as it does not serve as the upper bound for the objective function of the original optimization problem. In other words, the MMA convex approximation method does not assure a strictly decreasing trend of the objective function value during the iteration process.}
\end{remark}
\begin{center}
	\begin{tabular}{p{8.5cm}}
		\toprule[2pt]
		\textbf{Algorithm 3:}  GCMMA Algorithm for Solving Problem \eqref{eq_modulation_frequency}   \\
		\midrule[1pt]
		1: Initialize $\mathbf{f}^{(0, 0)}$, tolerance accuracy $\widehat{\epsilon}$; Set outer iteration\\\quad index $z=0$.\\
		2: \textbf{Repeat} \\
		3: \quad Set inner  loop iteration index $x=0$; Construct the\\\qquad MMA convex approximation subproblem of the origi-\\\qquad nal problem based on $\mathbf{f}^{(z, 0)}$ utilizing \eqref{eq_MMA_approxi}; Update\\\qquad $\mathbf{f}^{(z+1, 0)}$ utilizing \eqref{eq_MMA_opti_solution} and calculate $\widehat{g}(\mathbf{f}^{(z+1, 0)})$ and\\\qquad $\tilde{g}(\mathbf{f}^{(z+1, 0)})$. \\
		4: \quad \textbf{While $\widehat{g}(\mathbf{f}^{(z+1, x)}) <\tilde{g}(\mathbf{f}^{(z+1, x)})$ do}\\
		5: \qquad Update the conservative factor $\rho^{(z+1, x+1)}$ leveraging\\\qquad\quad \eqref{eq_conservative_factor_update} and construct the more conservative MMA sub-\\\qquad\quad problem with the updated $\rho$;\\
		6: \qquad Update $\mathbf{f}^{(z+1, x+1)}$ utilizing \eqref{eq_MMA_opti_solution} and  calculate the\\\qquad\quad objective value $\widehat{g}(\mathbf{f}^{(z+1, x+1)})$ and $\tilde{g}(\mathbf{f}^{(z+1, x+1)})$; Let\\\qquad\quad $x\leftarrow x+1$.\\
		7: \quad\textbf{end while}\\
		8: \quad Update $\mathbf{f}^{(z+1, 0)}$ with the obtained $\mathbf{f}^{(z+1, x+1)}$;  Let $z\leftarrow$\\\qquad$ z+1$.\\
		6: \textbf{Until} $|\tilde{g}(\mathbf{f}^{(z+1, 0)})-\tilde{g}(\mathbf{f}^{(z, 0)})|\leq \widehat{\epsilon}$.\\
		7: \textbf{Output:} the optimal $\mathbf{f}$. \\
		\bottomrule[2pt]
	\end{tabular}
\end{center}

In order to address this issue, an additional inner loop iteration is implemented to fine-tune the conservative factor $\rho$ for a more conservative convex approximation. This adjustment aims to satisfy the condition $\widehat{g}(\mathbf{f}^{(z)}|\mathbf{f}^{(z-1)})\geq \tilde{g}(\mathbf{f}^{(z)})$. Specifically, \textcolor{blue}{we can readily demonstrate that $\widehat{g}(\mathbf{f}|\mathbf{f}^{(z-1)})$ in \eqref{eq_MMA_approxi} is a monotonically increasing function w.r.t. the conservative factor $\rho$. Therefore, if the solution of the MMA subproblem in the $z$-th outer-loop iteration does not satisfy the third condition, an inner-loop iteration is employed to progressively increase the conservative factor until the third condition is fulfilled.} \textcolor{blue}{Fig. \ref{fig:conservative} illustrates how the conservative factor adjusts the conservativeness of the MMA convex approximation.} In the $(x+1)$-th inner loop iteration, the update rule of the conservative factor can be given by
\begin{align}\label{eq_conservative_factor_update}
	\rho^{(z, x+1)}\leftarrow
	\min\Big\{1.1\left(\rho^{(z, x)} +\delta^{(z, x)}\right), 10\rho^{(z, x)}\Big\},
\end{align}
where $\delta^{(z, x)}=\tilde{g}(\mathbf{f}^{(z, x)})-\widehat{g}(\mathbf{f}^{(z, x)})$.
  More details for the GCMMA algorithm are presented in \cite{svanberg2002class, svanberg2007mma}. Furthermore, we provide Algorithm 3 to present the solving process for the optimization problem \eqref{eq_modulation_frequency} utilizing the GCMMA method.

\begin{figure}[ht]
	\centering
	\includegraphics[scale=0.4]{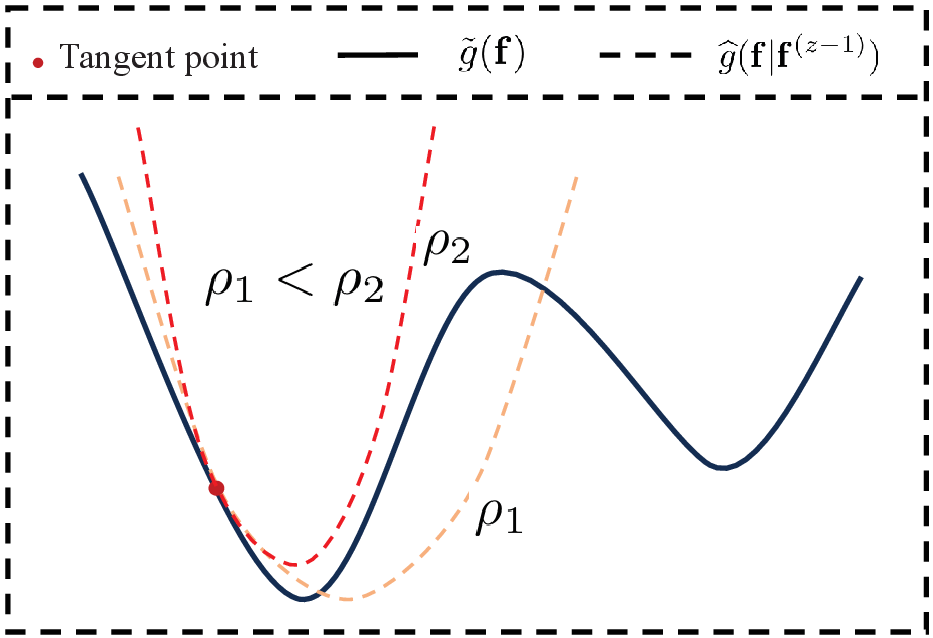}\\
	\caption{Schematic illustration of how the conservative factor $\rho$ adjusts the conservativeness of the MMA convex approximation.}\label{fig:conservative}
\end{figure}
%\vspace{-4mm}
\subsection{Analysis on computing complexity and convergence of the proposed algorithm}
The overall procedures of the proposed iterative algorithm for solving  the original optimization problem \eqref{eq_orig_opti} are summarized in Algorithm 4. This algorithm will iteratively solve the active beamforming, time-delay and modulation frequency subproblems until the difference in objective function values between adjacent iterations reach the minimum threshold $\breve{\epsilon}$.

Next, we will present the analysis on the computing complexity of the proposed algorithm. In particular, the computational complexity mainly comes from solving three key subproblems utilizing Algorithm 1-Algorithm 3, respectively. In terms of the active beamforming subproblem, the computing complexity is dominated by calculating the optimal $\mathbf{w}_k, ~k\in\mathcal{K},$ associated with the dual variable $\mu$, which is given by $\mathcal{O}_1=\mathcal{O}(I_1KN_\mathrm{t}^3)$, where $I_1$ denotes the total  number of iterations in Algorithm 1. For the second subproblem, the Riemannian conjugate gradient algorithm is leveraged to effectively tackle it. The computing complexity mainly stems from the calculation of the Euclidean gradient, which is derived as $\mathcal{O}_2=\mathcal{O}(I_2I^2)$ with $I_2$ being the total  number of iterations in Algorithm 2.  In the subproblem associated with the modulation frequencies, we utilize the GCMMA algorithm to solve it. The main computational challenge comes from the necessity to calculate the first-order gradient of the objective function w.r.t. the $\mathbf{f}$, leading to a computational complexity of $\mathcal{O}_3=\mathcal{O}(I_3I^2)$, where $I_3$ is the total number of outer loop iterations in Algorithm 3. Therefore, the total computational complexity of the proposed iterative algorithm can be calculated as $\mathcal{O}_\mathrm{tol}=I_\mathrm{tol}(\mathcal{O}_1+\mathcal{O}_2+\mathcal{O}_3)$, where $I_\mathrm{tol}$ is the total number of iterations for Algorithm 4.
\begin{center}
	\begin{tabular}{p{8.5cm}}
		\toprule[2pt]
		\textbf{Algorithm 4:}  The Proposed Iterative Algorithm for Solving the Original Problem \eqref{eq_orig_opti}   \\
		\midrule[1pt]
		1: Initialize $\Big(\{\mathbf{w}^{(0)}_k\}_{k=1}^K, \boldsymbol{\vartheta}^{(0)}, \mathbf{f}^{(0)}\Big)$, tolerance accuracy $\breve{\epsilon}$;\\\quad Set  the iteration index $q=0$.\\
		2: \textbf{Repeat} \\
		3: \quad Calculate the $W_k^{(q)}$ and $u_k^{(q)}$ with $\Big(\{\mathbf{w}^{(q)}_k\}_{k=1}^K, \boldsymbol{\vartheta}^{(q)},$\\\qquad$ \mathbf{f}^{(q)}\Big)$ leveraging \eqref{eq_opt_W_k_u_k}.\\
		4: \quad  Update $\{\mathbf{w}^{(q+1)}_k\}_{k=1}^K$ based on the solution solved by\\\qquad Algorithm 1.\\
		5: \quad Update the $W_k^{(q)}$ and $u_k^{(q)}$ with the $\{\mathbf{w}^{(q+1)}_k\}_{k=1}^K$.\\
		6: \quad Update $\boldsymbol{\vartheta}^{(q+1)}$ based on the solution solved by Algo-\\\qquad rithm 2.\\
		7: \quad Update the $W_k^{(q)}$ and $u_k^{(q)}$ with the $\boldsymbol{\vartheta}^{(q+1)}$.\\
		8: \quad Update $\mathbf{f}^{(q+1)}$ based on the solution solved by Algo-\\\qquad rithm 3; Let $q\leftarrow q+1$.\\
		9: \textbf{Until} the difference in objective function values between\\\quad adjacent iterations falls below $\breve{\epsilon}$.\\
		10: \textbf{Output:} the optimal $\Big(\{\mathbf{w}_k\}_{k=1}^K, \boldsymbol{\tau}, \mathbf{f}\Big)$. \\
		\bottomrule[2pt]
	\end{tabular}
\end{center}

The convergence of the proposed algorithm can be guaranteed due to the fact that the alternating algorithm can ensure that the current objective function value is not lower than that of the previous iteration during the iterative process since we can always obtain a solution not worse than the previous one. In addition, the convergence of the proposed algorithm will be further extensively examined through simulations in Section \ref{sec:S5}.
  \begin{figure*}[!t]
	\centering
	\includegraphics[scale=0.7]{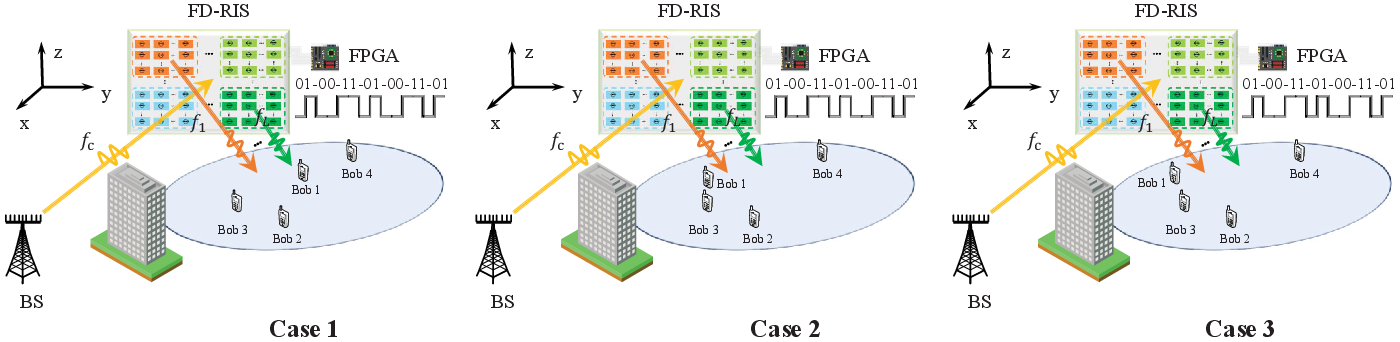}\\
	\caption{Three communication scenarios with various spatial correlations among users. }\label{fig:Three_cases}
\end{figure*}
\section{Numerical Simulations}\label{sec:S5}
In this section, the numerical simulations are implemented considering three scenarios featuring various spatial correlations among users to quantify the beamforming potential of the proposed multi-subarray FD-RIS and validate the effectiveness of the proposed iterative algorithm.  \textcolor{blue}{Note that the number of users is set as $K=$ 4. The specific details of these scenarios are provided as follows:}

\textcolor{blue}{\textbf{Case 1:} In this case, the spatial correlation among users is assumed to be negligible, which is a common assumption in conventional RIS-assisted communication systems. This setting corresponds to Case 1 illustrated in Fig. \ref{fig:Three_cases}. In particular, all users' positions are set as $(\theta_{\mathrm{r}1}, \theta_{\mathrm{r}2}, \theta_{\mathrm{r}3}, \theta_{\mathrm{r}4})=(100^\circ,110^\circ, 100^\circ, 120^\circ)$, $(\phi_{\mathrm{r}1}, \phi_{\mathrm{r}2}, \phi_{\mathrm{r}3}, \phi_{\mathrm{r}4})=(30^\circ,70^\circ, 30^\circ, 150^\circ)$, $(d_{\mathrm{r}1}, d_{\mathrm{r}2}, d_{\mathrm{r}3}, d_{\mathrm{r}4})$\\$=(40 m, 75 m, 55 m, 40m)$.
}

\textcolor{blue}{\textbf{Case 2:} In this scenario, we assume that there is a strong spatial correlation between Bob 1 and Bob 3 users. This setting corresponds to Case 2 illustrated in Fig. \ref{fig:Three_cases}. All users' positions are set as $(\theta_{\mathrm{r}1}, \theta_{\mathrm{r}2}, \theta_{\mathrm{r}3}, \theta_{\mathrm{r}4})=(100^\circ,110^\circ, 100^\circ, 120^\circ)$, $(\phi_{\mathrm{r}1}, \phi_{\mathrm{r}2}, \phi_{\mathrm{r}3}, \phi_{\mathrm{r}4})=(20^\circ,70^\circ, 30^\circ, 150^\circ)$, $(d_{\mathrm{r}1}, d_{\mathrm{r}2}, d_{\mathrm{r}3}, d_{\mathrm{r}4})$\\$=(40 m, 75 m, 55 m, 40m)$.
}

\textcolor{blue}{\textbf{Case 3:} It is assumed that Bob 1 and Bob 3 share the same spatial direction but are located at different distances, as illustrated in Case 3 of Fig. \ref{fig:Three_cases}. It is worth noting that, in this extreme scenario, the beamforming capability of a conventional RIS fails to effectively manage the mutual interference between Bob 1 and Bob 3. All users' positions are set as $(\theta_{\mathrm{r}1}, \theta_{\mathrm{r}2}, \theta_{\mathrm{r}3}, \theta_{\mathrm{r}4})=(100^\circ,110^\circ, 90^\circ, 120)$, $(\phi_{\mathrm{r}1}, \phi_{\mathrm{r}2}, \phi_{\mathrm{r}3}, \phi_{\mathrm{r}4})=(120^\circ,70^\circ, 30^\circ, 150^\circ)$, $(d_{\mathrm{r}1}, d_{\mathrm{r}2}, d_{\mathrm{r}3}, d_{\mathrm{r}4})$ $=(40 m, 75 m, 55 m, 40 m)$.
}

Additionally, we consider the following baseline schemes:
\begin{enumerate}
  \item \textbf{RIS-assisted scheme}: In this scheme, the conventional RIS comprising an equivalent number of elements as the FD-RIS will be employed to support the wireless communication networks. \textcolor{blue}{It is worth noting that in this scheme, both the active beamforming at the BS and the reflection coefficients of the RIS need to be jointly designed. For the active beamforming variables, the same method presented in Section IV-A is adopted to efficiently solve the problem. As for the passive beamforming variables, the Riemannian conjugate gradient algorithm is employed to achieve effective optimization.}
  \item \textbf{SDR scheme}: The SDR method \cite{luo2010semidefinite} and SCA technique are leveraged to address the active beamforming subproblem and time-delay subproblem. Specifically, letting $\mathbf{W}=[\mathbf{w}_1, \cdots, \mathbf{w}_K]$, $\widehat{\mathbf{W}}=\operatorname{vec}(\mathbf{W})\operatorname{vec}(\mathbf{W})^H$, the active beamforming design subproblem can be equivalently converted as
\begin{subequations}
	\begin{align}
		&\max _{\widehat{\mathbf{W}}}~ \sum_{k=1}^{K} \omega_k R_k,\notag \\
		&~\text { s.t. }\operatorname{Tr}(\widehat{\mathbf{W}})\leq P_\mathrm{tmax},\\
		&\qquad~\operatorname{rank}(\widehat{\mathbf{W}})=1,
	\end{align}
\end{subequations}	
where $R_k$ is re-written as
$R_k=\log_2\left(\frac{\operatorname{Tr}(\widehat{\mathbf{W}}\tilde{\mathbf{F}})+\sigma_k^2}{\operatorname{Tr}(\widehat{\mathbf{W}}\tilde{\mathbf{F}}_{-k})+\sigma_k^2}\right)$
with $\mathbf{F}=\mathbf{H}_\mathrm{BR}^H\boldsymbol{\Theta}_k^H\tilde{\boldsymbol{\Theta}}^H\mathbf{h}_{\mathrm{r}k}^*(\mathbf{H}_\mathrm{BR}^H\boldsymbol{\Theta}_k^H\tilde{\boldsymbol{\Theta}}^H\mathbf{h}_{\mathrm{r}k}^*)^H$, $\tilde{\mathbf{F}}=\mathbf{E}_{K}\otimes\mathbf{F}$, $\tilde{\mathbf{F}}_{-k}=(\mathbf{E}_{-k}\mathbf{E}_{-k}^T)\otimes\mathbf{F}$, \textcolor{blue}{$\mathbf{E}_{K}$ denotes a $K\times K$ identity matrix, $\mathbf{E}_{-k}$ is obtained by deleting the $k$-th column of $\mathbf{E}_{K}$.
Then, the first-order Taylor expansion is utilized to transform the non-convex objective function into convex. The rank-one constraint is equivalently reformulated and incorporated into the objective function as a penalty term. By iteratively enforcing the penalty term towards zero, the rank-one will be met.
 The detailed process refer to \cite{xiao2025STAR-RIS_UAV}.}
 \textcolor{blue}{ Similarly, we can also handle the time-delay subproblem as
 	\begin{subequations}
 		\begin{align}
 			&\max _{\mathbf{U}}~ \sum_{k=1}^{K} \omega_k R_k,\notag \\
 			&~\text { s.t. }\operatorname{diag}(\mathbf{U})= \mathbf{I}_{I\times 1},\\
 			&\qquad~\operatorname{rank}(\mathbf{U})=1,
 		\end{align}
 	\end{subequations}	
 	where $R_k=\log_2\left(\frac{\operatorname{Tr}(\mathbf{U}\tilde{\mathbf{G}}_k)+\sigma_k^2}{\operatorname{Tr}(\mathbf{U}\tilde{\mathbf{G}}_{-k})+\sigma_k^2}\right)$, $\mathbf{U}=\tilde{\boldsymbol{\theta}}^*\tilde{\boldsymbol{\theta}}^T$, $\boldsymbol{\theta}=\operatorname{diag}(\tilde{\boldsymbol{\Theta}})$, $\tilde{\mathbf{G}}_k=\mathbf{H}_{\mathrm{r}k}\boldsymbol{\Theta}_k\mathbf{H}_\mathrm{BR}\sum_{j=1}^{K}\mathbf{w}_j\mathbf{w}_j^H\mathbf{H}_\mathrm{BR}^H\boldsymbol{\Theta}_k^H\times$\\$\mathbf{H}_{\mathrm{r}k}^H,$ $\tilde{\mathbf{G}}_k=\mathbf{H}_{\mathrm{r}k}\boldsymbol{\Theta}_k\mathbf{H}_\mathrm{BR}\sum_{j\neq k}^{K}\mathbf{w}_j\mathbf{w}_j^H\mathbf{H}_\mathrm{BR}^H\boldsymbol{\Theta}_k^H\mathbf{H}_{\mathrm{r}k}^H$, $\mathbf{H}_{\mathrm{r}k}=\operatorname{Diag}(\mathbf{h}_{\mathrm{r}k})$. The same approach is employed to address this optimization problem as in the active beamforming subproblem above.
}
  \item \textbf{ZF scheme}: The ZF algorithm is adopted to derive the close-form active beamforming variables in this scheme. Specifically, $\mathbf{w}_k, ~k\in\mathcal{K}$, can be derived as
	$\mathbf{w}_k=\sqrt{P_k}\frac{\tilde{\mathbf{W}}(:, k)}{\|\tilde{\mathbf{W}}(:, k)\|}$,
where $\tilde{\mathbf{W}}=\mathbf{Z}^H(\mathbf{Z}\mathbf{Z}^H)^{-1}$ with $\mathbf{Z}=[(\mathbf{h}_{\mathrm{r}1}^T\tilde{\boldsymbol{\Theta}}\boldsymbol{\Theta}_1\mathbf{H}_\mathrm{BR})^T, \cdots, (\mathbf{h}_{\mathrm{r}K}^T\tilde{\boldsymbol{\Theta}}\boldsymbol{\Theta}_K\mathbf{H}_\mathrm{BR})^T]^T$. And $P_k$ denotes the allocated power for the $k$-th user, which needs to be carefully devised.
\end{enumerate}

The system parameter values of executing the simulations are set as: the carrier frequency $f_\mathrm{c}=28$ GHz, the FD-RIS position $(d_\mathrm{br}, \theta_\mathrm{br}, \varphi_\mathrm{br})$= $(100 m, 30^\circ, 120^\circ)$, $g=$1, the Rician factor $\beta_1=\beta_2=10$ dB, $\sigma_k^2=-110$ dBm, $k\in\mathcal{K}$, the number of elements in each subarray $M\times N=2\times 2$,  $f_{\max}=20$ MHz, $f_{\min}=0.2$ MHz.
%\begin{table}[h!]\
%	\renewcommand\arraystretch{1.3}
%	\centering
%	\caption{Parameters Setting}
%	\label{tab:table1}
%	\begin{tabular}{|M{4cm}|M{3.8cm}|}
%		\hline		
%		\textbf{Parameters}& \textbf{Symbol and Value}\\
%		\hline
%		Carrier frequency&$f_\mathrm{c}=28$ GHz   \\
%		\hline
%		FD-RIS position &$(d_\mathrm{br}, \theta_\mathrm{br}, \varphi_\mathrm{br})$= $(100$m$, 30^\circ, 120^\circ)$\\
%		\hline	
%		Harmonic order number & $g=$1\\
%		\hline
%		Rician factor& $\beta=10$ dB\\
%		\hline
%		Noise power &$\sigma_k^2=110$ dBm, $k\in\mathcal{K}$\\
%		\hline
%		The number of elements in each subarray & $M\times N=2\times 2$\\	
%		\hline	
%		Upper and lower bound of modulation frequency  & $f_{\max}=20$ MHz, $f_{\min}=0.2$ MHz\\
%		\hline	
%		User position &$[d_{\mathrm{r}1}, d_{\mathrm{r}2}, d_{\mathrm{r}3}, d_{\mathrm{r}4}]=[40, 75, 55, 40]$ m, $[\theta_{\mathrm{r}1}, \theta_{\mathrm{r}2}, \theta_{\mathrm{r}3}, \theta_{\mathrm{r}4}]=[90^\circ, 90^\circ, 90^\circ, 90^\circ]$, $[\varphi_{\mathrm{r}1}, \varphi_{\mathrm{r}2}, \varphi_{\mathrm{r}3}, \varphi_{\mathrm{r}4}]=[30^\circ, 70^\circ, 30^\circ, 150^\circ]$.\\	
%		\hline
%	\end{tabular}
%\end{table}
\begin{figure}[ht]
	\centering
	\includegraphics[scale=0.33]{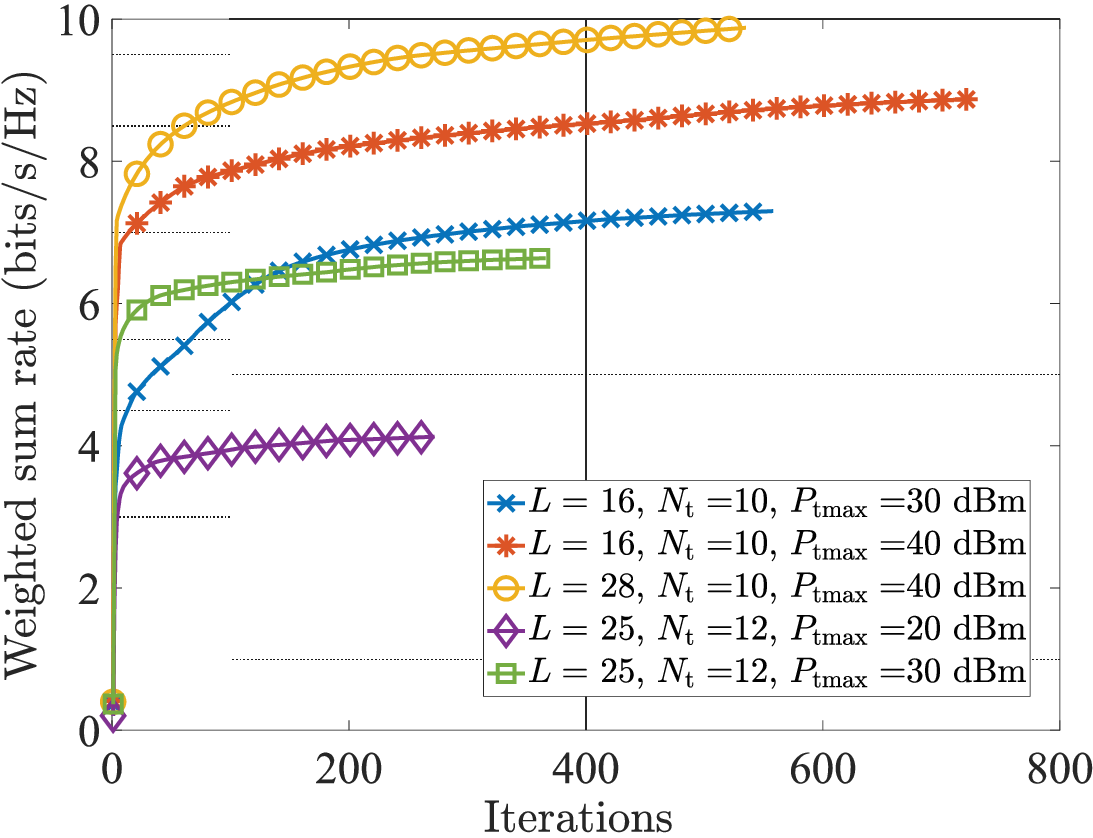}\\
	\caption{The convergence performance of the proposed algorithm taking account of different $L$, $N_\mathrm{t}$ and $P_\mathrm{tmax}$.}\label{fig:WS_vs_iterations}
\end{figure}
\begin{figure}[ht]
	\centering
	\includegraphics[scale=0.33]{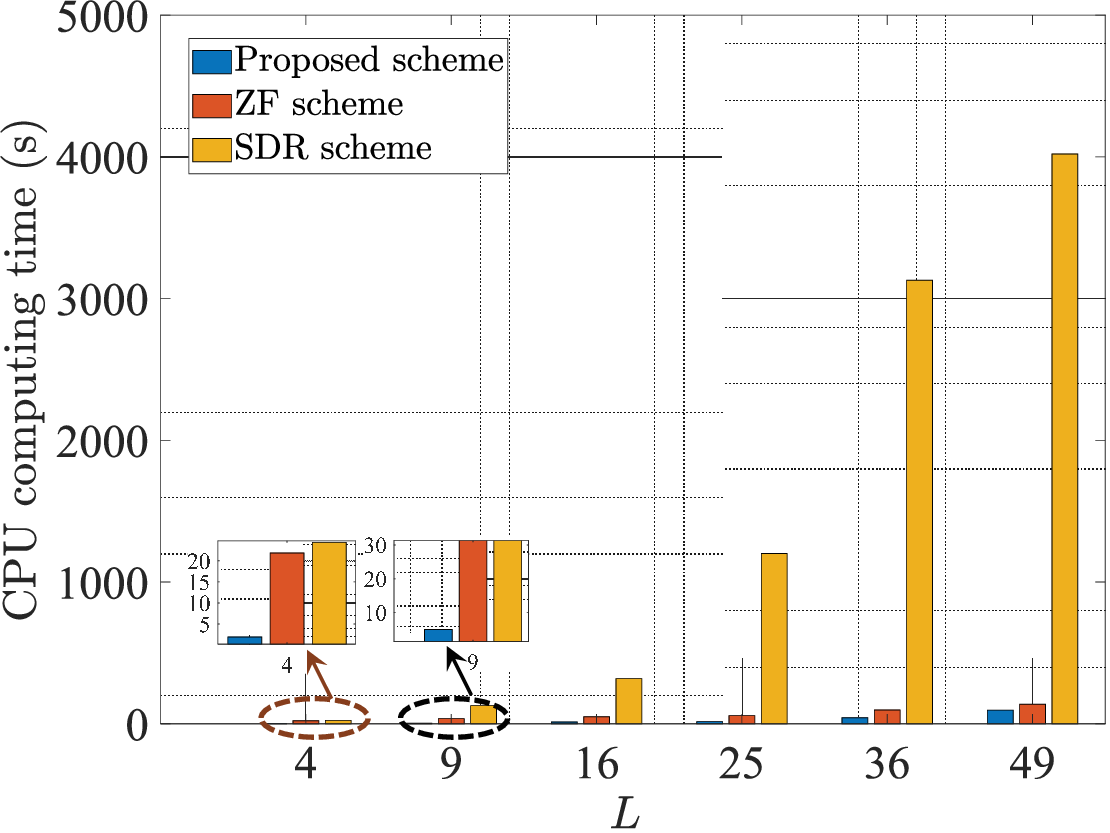}\\
	\caption{CPU computing time of the proposed algorithm and baseline algorithms versus the number of subarrays.}\label{fig:CPU_time}
\end{figure}

The convergence analysis of the proposed iterative algorithm is depicted in Fig. \ref{fig:WS_vs_iterations} considering different number of subarrays and antennas, as well as maximum transmitting power. Specifically, it is observed that the objective function value shows a monotonically increasing tendency with the growth of iterations across all cases. In addition, the algorithm under consideration consistently demonstrates the ability to swiftly attain a significantly high objective function value after just a few iterations, and typically converges within dozens of iterations across all scenarios, which indicates that the proposed algorithm's convergence can be guaranteed.

To further exhibit the advantage of  the proposed algorithm in computing efficiency, Fig. \ref{fig:CPU_time} illustrates a comparison of the CPU processing time between the proposed algorithm and the baseline algorithms, namely the ZF scheme and SDR scheme, taking account of  varying numbers of subarrays installed on the FD-RIS. The simulations are implemented with a CPU processor of Intel(R) Core(TM) i5-10505 CPU @ 3.20GHz. From this figure, we can find that  the time required by both the proposed scheme and the ZF scheme exhibits a gradual increase between $L=$ 4 and $L=$ 49. In contrast, the SDR scheme demonstrates a significant rise as the number of subarrays grows. Furthermore, the difference in CPU processing time  between the proposed method and the SDR-based scheme becomes increasingly pronounced, exceeding 3800 seconds when $ L =$ 49. The showcased performance of the proposed algorithm exhibits a notable edge in computational efficiency and the ability to tackle optimization hurdles in real-world communication systems.
\begin{figure}[ht]
	\centering
	\includegraphics[scale=0.45]{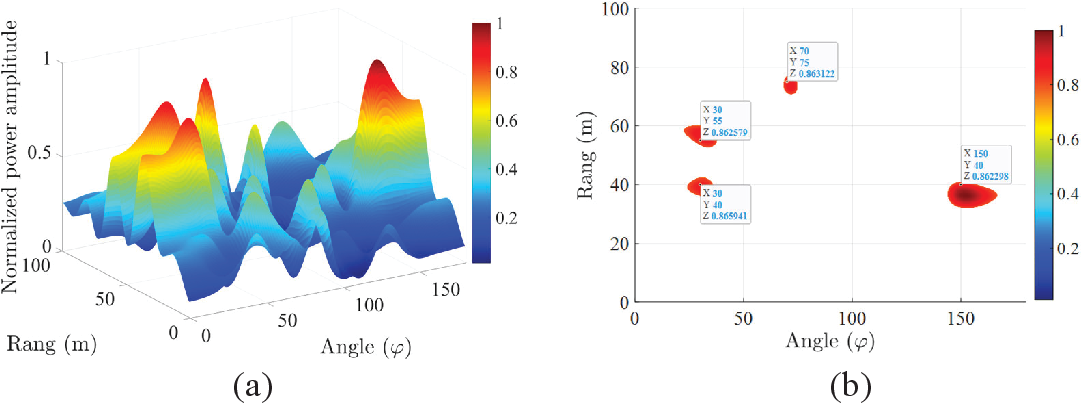}\\
	\caption{Users' receiving energy pattern with the assistance of the multi-subarray FD-RIS: (a) 3D pattern; (b) 2D pattern.}\label{fig:pattern}
\end{figure}

\textcolor{blue}{To clearly demonstrate and verify the powerful distance–angle modulation capability of the multi-subarray FD-RIS, the third scenario is adopted as an illustrative example, and the corresponding 3D and 2D patterns of the users’ received energy distribution assisted by the FD-RIS are shown in Fig. \ref{fig:pattern}. It is noted that, to facilitate pattern visualization, the elevation angles of all users are assumed identical and fixed at $90^\circ$.} Specifically, it is observed that the designed FD-RIS demonstrates the capability to manage the incoming signals in both distance and angle dimensions, which can precisely direct the signal energy towards the specific positions of users. Furthermore, to emphasize the considerable benefits of the FD-RIS in signal modulation across distance dimensions, we consider a scenario where two users are positioned at equal angles but varying distances, e.g., User 1 and User 3. The presented pattern results demonstrate that FD-RIS can accurately transmit the desired signal to the user's location even under scenarios where the users at same direction with different distances. This capability, which is unattainable for conventional RIS in far-field communication scenarios, highlights the unique advantages of the FD-RIS.
\begin{figure}[ht]
	\centering
	\includegraphics[scale=0.43]{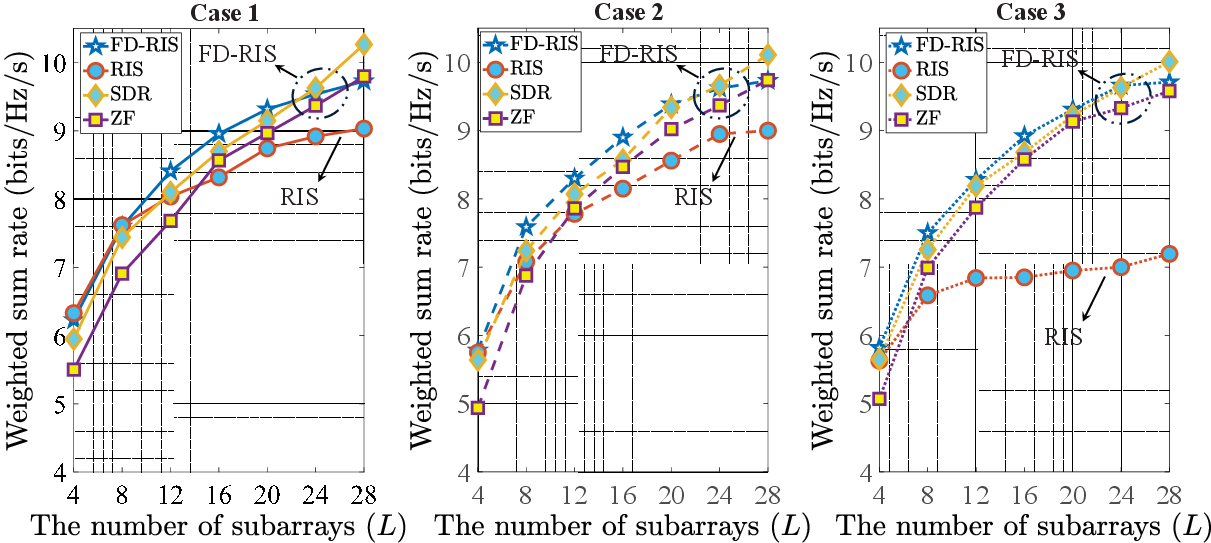}\\
	\caption{Weighted sum rate versus the number of subarrays $L$ considering  $N_\mathrm{t}=$ 10 and power budget $P_\mathrm{tmax}=$ 40 dBm at the BS, and $\omega_k=$ 0.25, $k\in\mathcal{K}$.}\label{fig:WS_vs_L}
\end{figure}

\textcolor{blue}{Fig.\ref{fig:WS_vs_L} gives the simulation results of the weighted sum rate versus the number of subarrays $L$ with $N_\mathrm{t}=$ 10 and power budget $P_\mathrm{tmax}=$ 40 dBm at the BS, and $\omega_k=$ 0.25, $k\in\mathcal{K}$, under the different communication cases.} In particular, we can find that the weighted sum rate shows a consistent gradual increase as the subarray size expands across all scenarios, but the increasing rates are decreasing. This is due to the fact that although the growing quantity of subarrays can enhance the level of control over incoming signals by providing additional degrees of freedom (DoFs), the system's performance may still be restricted by other system configurations. \textcolor{blue}{
 Additionally, the SDR solution exhibits slightly inferior performance compared to the proposed approach when 
 $L$ ranges from 4 to less than 24. However, as $L$ exceeds 24, the SDR scheme attains superior performance gains, albeit at the expense of substantially increased computational complexity required to solve the associated optimization problems. We can also find that the performance improvement of the proposed scheme significantly exceeds that of the ZF scheme. These results demonstrate that the proposed algorithm possesses a significant advantage in optimization ability.} 
  Furthermore, it is observed that the disparity in performance between the FD-RIS-assisted scheme and the conventional RIS-assisted scheme becomes more pronounced as the number of subarrays increases.  
 This is mainly because the FD-RIS provides higher DoFs, allowing signal energy to be more effectively focused on the users via distance–angle beamforming. \textcolor{blue}{According to the results presented in Case 1–Case 3, the performance gain provided by the RIS gradually decreases as the spatial correlation among users increases, especially when some users have the same spatial direction. This is because, as the spatial correlation among users gradually increases, the angle-only beamforming capability of conventional RIS is significantly weakened. In particular, when multiple users are located in the same or similar directions, the RIS can hardly distinguish signals among users through beam control in a single angular dimension, resulting in degraded beamforming performance. In such cases, the RIS can typically guarantee the QoS for only one user, while the QoS of other users becomes severely limited. In contrast, the FD-RIS-aided systems (including the FD-RIS, SDR, and ZF schemes) maintain nearly the same level of performance gain with the distance-angle beamforming, showcasing the high adaptability and robustness for transmission environment of FD-RIS. }

Then, we investigate the varied trend of the weighted sum rate versus the number of antennas $N_\mathrm{t}$ considering different $L$ and $P_\mathrm{tmax}$, and $\omega_k=$ 0.25, $k\in\mathcal{K}$, under three different communication scenarios, as depicted in Fig. \ref{fig:WS_vs_Nt}. Specifically, an increasing trend in the weighted sum rate is observed as the number of antennas increases in all scenarios, which is due to the fact that the increased $N_\mathrm{t}$ can provide more spatial DoFs to transmit different users' desired signal and effectively manage the interference among users' signals. \textcolor{blue}{
	Next, a more detailed analysis of Fig. \ref{fig:WS_vs_Nt} is presented. Specifically, according to the results shown in Case 1, it can be observed that in scenarios with low spatial correlation among users, when $L$ is small, both the FD-RIS  and the conventional RIS bring almost identical performance gains to the system. In this case, since the channels of different users are nearly independent, the beamforming capability of the conventional RIS is already sufficient to effectively enhance the desired signals and manage inter-user interference, meaning the additional dimensional advantage of the FD-RIS has not yet been fully utilized. However, as $L$ (for example, $L=$ 25), the FD-RIS-assisted system exhibits only a slight advantage over the conventional RIS in terms of weighted sum rate. This further confirms that under low spatial correlation, the system performance improvement of RIS is mainly constrained by the number of physical resources rather than the dimensionality of beamforming.	
	When the spatial correlation increases, the performance advantage of FD-RIS becomes increasingly significant. Specifically, as observed in Case 2 and Case 3, the FD-RIS maintains almost the same performance gain as in Case 1. In contrast, the conventional RIS, which only supports one-dimensional angular beamforming, fails to flexibly control signal energy and manage interference. As a result, its overall system performance deteriorates rapidly with increasing spatial correlation. This further validates the superior capability of FD-RIS's two-dimension beamforming.}
% Generally, under the same system configuration, the traditional RIS-assisted scheme exhibits the lowest performance, while the proposed scheme demonstrates the most significant performance improvement. Moreover, the performance gap between the two schemes becomes increasingly obvious as the number of antennas increases. This further indicates the superiority of the FD-RIS in enhancing communication systems' performance. 

%Additionally, while the disparity in performance between the SDR scheme and the suggested scheme diminishes as the quantity of antennas grows, the SDR solution requires a significant amount of computational resources and time.
\begin{figure}[ht]
	\centering
	\includegraphics[scale=0.45]{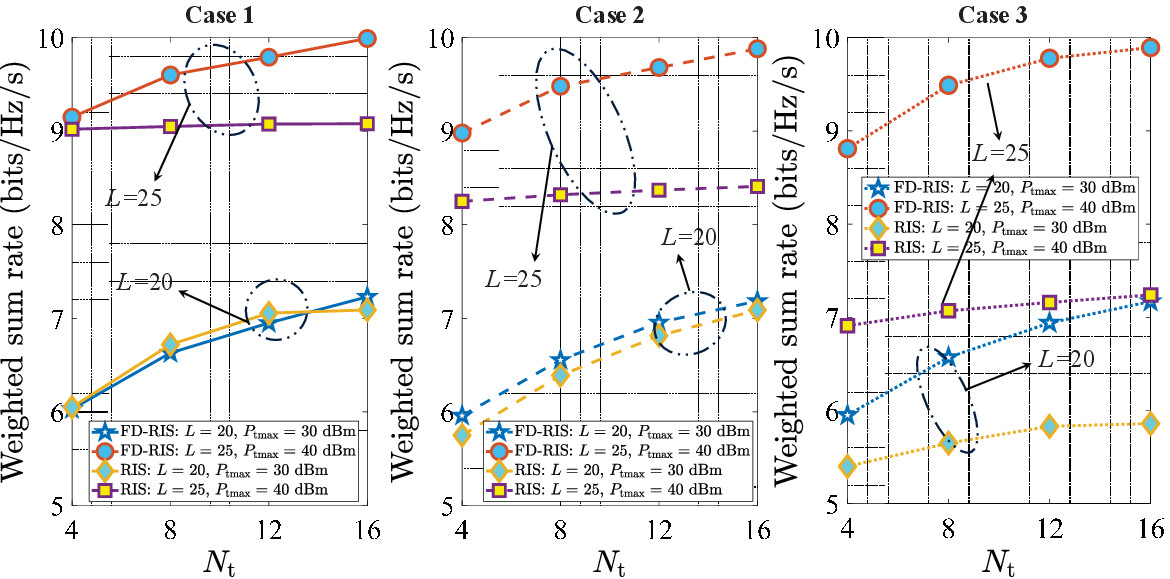}\\
	\caption{Weighted sum rate versus the number of antennas, $N_\mathrm{t}$, considering different number of subarrays and power budget at the BS, and $\omega_k=$ 0.25, $k\in\mathcal{K}$.}\label{fig:WS_vs_Nt}
\end{figure}
\begin{figure}[ht]
	\centering
	\includegraphics[scale=0.45]{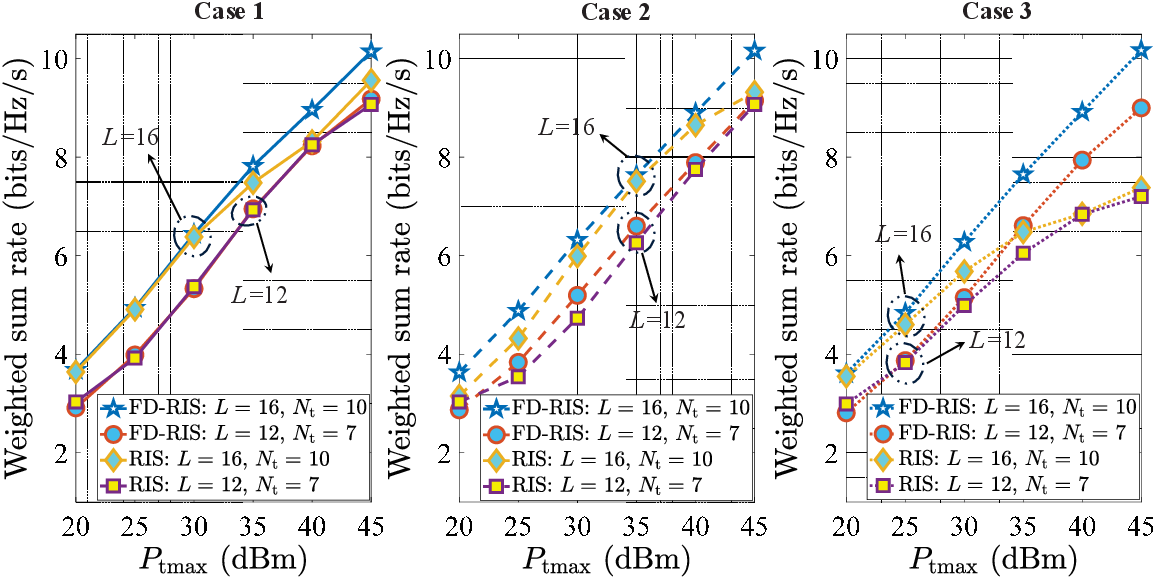}\\
	\caption{Weighted sum rate versus the power budget at the BS, $P_\mathrm{tmax}$, considering different number of antennas and subarrays, and $\omega_k=$ 0.25, $k\in\mathcal{K}$.}\label{fig:WS_vs_Ptmax}
\end{figure}

\textcolor{blue}{To quantify the energy efficiency of FD-RIS and RIS,} we further explore the influence of the power budget at the BS on the weighted sum rate taking into account of weight factor with $\omega_k=$ 0.25, $k\in\mathcal{K}$, and different $L$ and $N_\mathrm{t}$, as shown in Fig. \ref{fig:WS_vs_Ptmax}. \textcolor{blue}{In this simulation, two system configurations ($L=16, N_\mathrm{t}=10$ and $L=12, N_\mathrm{t}=7$) are employed to execute the proposed and baseline schemes. According to the presented simulation results, we can observe that the nearly linear growth of the weighted sum rate is observed with the expansion of the power budget $P_\mathrm{tmax}$ in the FD-RIS-aided schemes, while the increasing rate in the weighted sum rate gradually diminishes in the traditional RIS-supported scheme, particularly in scenarios with higher spatial correlation (i.e., Case 2 and Case 3). This is because the distance-angle beamforming capability of the FD-RIS enables it to effectively concentrate signal energy toward the intended users while mitigating mutual interference among them, thereby providing a significant advantage in enhancing the energy efficiency of wireless networks. Similarly, the FD-RIS maintains almost the same performance gain under different levels of spatial correlation, presenting remarkable environmental adaptability and robustness.}
%\begin{figure}[ht]
%	\centering
%	\includegraphics[scale=0.33]{AR_vs_WF.eps}\\
%	\caption{Achievable rate of each user under two sets of weight factor.}\label{fig:AR_vs_WF}
%\end{figure}
%
%The weight factors, as previously stated in Section \ref{sec:S3}, play a crucial role in managing users' priority and ensuring fairness across users. To illustrate this feature, Fig. \ref{fig:AR_vs_WF} investigates the influence of the weight factors on the achievable rate. Specifically, when each user possesses the equal weight factor, i.e., $\omega_k=0.25, k\in\mathcal{K}$, it is observed that the data rate attained by each user is inversely proportional to their distance from the FD-RIS. Consequently, User 1 enjoys the highest rate owing to its proximity, whereas User 2 experiences the lowest rate as a result of being the farthest away. To ensure the rate fairness, we appropriately increase the weight factor of users further away from the FD-RIS. In the case of unequal weight, each user's weight factor is revised as $[\omega_1, \omega_2, \omega_3, \omega_4]=[0.2, 0.35, 0.25, 0.2]$. According to the presented simulation results under the scenario of unequal weights, we can achieve a more balanced data rate distribution, with User 2 receiving the highest priority in the communication system.
\section{Conclusion and Prospect}\label{sec:S6}
Considering that the existing FD-RIS systems struggle with managing the undesired harmonic signals and exploiting the diversity of frequency offsets. To address these issues, this paper proposes a multi-subarray FD-RIS framework, in which the RIS is divided into multiple subarrays, each operating with a distinct time-modulation frequency to achieve frequency offset diversity. Additionally, to suppress the undesired harmonic interference, a novel time-modulation technique is introduced to periodically adjust the phase shifts of each element. According to the proposed multi-subarray FD-RIS framework, we first analytically derive its signal processing model. Then, the proposed multi-subarray FD-RIS is applied into the multi-user wireless communication systems to evaluate its potentials.
Specifically, we establish a non-convex optimization problem to maximize the weighted sum rate of all users through designing the active beamforming, time delays and modulation frequencies. To address this problem, \textcolor{blue}{the MMSE method is first leveraged to transformed the objective function and then a novel iterative algorithm based on Lagrange multiplier method, bisection search method, Riemannian conjugate gradient algorithm and GCMMA method is proposed to solve these three subproblem with low computing complexity. To evaluate the performance of the proposed scheme, three communication scenarios with different levels of spatial correlation among users are considered. Extensive simulation results indicate that the proposed multi-subarray FD-RIS scheme, supported by the distance–angle beamforming mechanism, can stably maintain overall capacity under various scenarios, demonstrating excellent robustness and adaptability. Moreover, compared with conventional RIS, the FD-RIS significantly enhances system performance through its powerful beamforming capability, with the advantage being particularly pronounced when users share the same spatial angle.}

\textcolor{blue}{This paper provides an initial contribution to the multi-subarray FD-RIS. For future works, several pertinent potential use cases can be recognized: (i) \textbf{ISAC}: Leveraging distance–angle beamforming of FD-RIS, the ISAC system can simultaneously obtain distance and angle information of targets based on the echo signals. 
	(ii) \textbf{Secure communication}: The single-dimensional control of traditional RIS may lead to ``security
	blind zone'' in communication systems. Specifically, when an eavesdropper and a legitimate user locate at the same spatial direction, the system struggles to distinguish between them, making it difficult to effectively suppress eavesdropping attempts. In contrast, FD-RIS can exploit differences in the distance domain to adjust the signal propagation, thereby overcoming this challenge.
}
\appendices
\section{Proof of Theorem \ref{th1}}\label{append 1}
In this section, the monotonicity of the function $P(\mu)=\sum_{k=1}^{K}\left\|\mathbf{w}^\mathrm{opt}_k(\mu)\right\|^2$ will be proved. Specifically, we first introduce two variables $\mu_1$ and $\mu_2$ with $\mu_1>\mu_2>0$. Furthermore, let $\{\mathbf{w}^\mathrm{opt}_k(\mu_1)\}_{k=1}^K$ and $\{\mathbf{w}^\mathrm{opt}_k(\mu_2)\}_{k=1}^K$ denote the optimal solutions of the Lagrange function \eqref{eq_lagrange} with $\mu_1$ and $\mu_2$, respectively. Considering that $\{\mathbf{w}^\mathrm{opt}_k(\mu_1)\}_{k=1}^K$ is the optimal solutions, we have
\begin{align}\label{eq_mu1}
	\mathscr{L}\left(\{\mathbf{w}^\mathrm{opt}_k(\mu_1)\}_{k=1}^K, \mu_1\right)\leq\mathscr{L}\left(\{\mathbf{w}^\mathrm{opt}_k(\mu_2)\}_{k=1}^K, \mu_1\right).
\end{align}
Similarly, we can also have
\begin{align}\label{eq_mu2}
	\mathscr{L}\left(\{\mathbf{w}^\mathrm{opt}_k(\mu_2)\}_{k=1}^K, \mu_2\right)\leq\mathscr{L}\left(\{\mathbf{w}^\mathrm{opt}_k(\mu_1)\}_{k=1}^K, \mu_2\right).
\end{align}
Adding \eqref{eq_mu1} and \eqref{eq_mu2}, we have
\begin{align}
	(\mu_1-\mu_2)P(\mu_1)\leq	(\mu_1-\mu_2)P(\mu_2).
\end{align}
Due to $\mu_1>\mu_2$, we can derive $P(\mu_1)\leq	P(\mu_2)$. Thus, we can prove that function $P(\mu)=\sum_{k=1}^{K}\left\|\mathbf{w}^\mathrm{opt}_k(\mu)\right\|^2$ is a monotonically decreasing function w.r.t. $\mu$.

\ifCLASSOPTIONcaptionsoff
  \newpage
\fi
\bibliographystyle{IEEEtran}
\bibliography{FD-RIS}

\end{document}